\theoremstyle{theorem}
\newtheorem{theorem}{Theorem}
\newtheorem{conjecture}{Conjecture}
\DeclareMathOperator*{\E}{\mathbb{E}}
\DeclareMathOperator*{\poly}{\text{poly}}
\definecolor{red}{RGB}{255,0,0}
\definecolor{navy}{RGB}{46,72,102}
\definecolor{pink}{RGB}{219,48,122}
\definecolor{grey}{RGB}{184,184,184}
\definecolor{yellow}{RGB}{255,192,0}
\definecolor{grey1}{RGB}{217,217,217}
\definecolor{grey2}{RGB}{166,166,166}
\definecolor{grey3}{RGB}{89,89,89}
\newcommand{\co}[1]{{\color{blue}{(CO: #1)}}}
\begin{document}
\preprint{APS/123-QED}

\title{Exploring Shallow-Depth Boson Sampling: Towards Scalable Quantum Supremacy}
\author{Byeongseon Go}
\affiliation{Department of Physics and Astronomy, Seoul National University, Seoul 08826, Republic of Korea}
\author{Changhun Oh}
\email{changhun0218@gmail.com}
\affiliation{Pritzker School of Molecular Engineering, The University of Chicago, Chicago, IL 60637, USA}
\affiliation{Department of Physics, Korea Advanced Institute of Science and Technology, Daejeon 34141, Republic of Korea}
\author{Liang Jiang}
\affiliation{Pritzker School of Molecular Engineering, The University of Chicago, Chicago, IL 60637, USA}
\author{Hyunseok Jeong}
\affiliation{Department of Physics and Astronomy, Seoul National University, Seoul 08826, Republic of Korea}

\begin{abstract}
Boson sampling is a sampling task proven to be hard to simulate efficiently using classical computers under plausible assumptions, which makes it an appealing candidate for quantum supremacy. 
However, due to a large noise rate for near-term quantum devices, it is still unclear whether those noisy devices maintain the quantum advantage for much larger quantum systems.
Since the noise rate typically grows with the circuit depth, an alternative is to find evidence of simulation hardness at the shallow-depth quantum circuit. 
To find the evidence, one way is to identify the minimum depth required for the average-case hardness of approximating output probabilities, which is considered a necessary condition for the state-of-the-art technique to prove the simulation hardness of boson sampling.
In this work, we analyze the output probability distribution of shallow-depth boson sampling for Fock-states and Gaussian states and examine the limitation of the average-case hardness argument at this shallow-depth regime for geometrically local architectures. 
We propose a shallow-depth linear optical circuit architecture that can overcome the problems associated with geometrically local architectures. 
Our numerical results suggest that this architecture demonstrates possibilities of average-case hardness properties in a shallow-depth regime through its resemblance to the global Haar-random boson sampling circuit. 
This result implies that the corresponding architecture has the potential to be utilized for scalable quantum supremacy with its shallow-depth boson sampling. 
\end{abstract}

\maketitle

\section{Introduction}
We live in an exciting era with the emergence of noisy intermediate-scale quantum (NISQ) devices~\cite{Preskill2018NISQ}. 
Those NISQ devices are expected to be able to outperform classical computers in some computational tasks, which refers to quantum advantage or quantum supremacy.
Various computational problems are proposed to be a candidate for demonstrating quantum advantage with NISQ devices, and the prominent candidates at the present time are boson sampling (BS) \cite{aaronson2011computational, hamilton2017gaussian, deshpande2021quantum} and random circuit sampling (RCS) \cite{Bouland2019nature, arute2019quantum}; both denote sampling problems from the random quantum circuit instances but in different systems. 
They have been complexity-theoretically proven to be hard to efficiently simulate with classical computers under plausible assumptions. 
Ever since the theoretical foundation~\cite{aaronson2011computational, Bouland2019nature, hamilton2017gaussian, deshpande2021quantum}, there have been plenty of experimental results claiming the first realization of quantum advantage with BS~\cite{zhong2020quantum, zhong2021phase, Xanadu2022nature, deng2023gaussian} and RCS~\cite{Boixo2018nature, arute2019quantum, wu2021strong, morvan2023phase}.

However, it is still unclear whether the quantum advantage can be maintained for larger quantum systems, and beyond that, for the asymptotic limits as system size scales. 
The major obstacle to the scalability of the quantum advantage is the uncorrected noise on near-term quantum devices, which suppresses the quantumness and may eventually allow efficient classical simulation. 
In particular, as the noise rate typically grows with circuit depth, the circuit depth should not be too large to accomplish the hardness of classical simulation for noisy devices.
Otherwise, the accumulated noise induced by a large-depth circuit may result in the loss of quantumness and classical intractability.
On the other hand, the circuit depth should also be large enough to achieve the simulation hardness, as a shallow depth circuit leads to the weakly entangled quantum state, which is often efficiently simulable by classical algorithms~\cite{Vidal2003prl, Vidal2004prl, Verstraete2004prl, Napp2022prx, Deshpande2018prl, Qi2022PRA, Oh2022PRL}. 
Therefore, to attain the scalable quantum advantage with noisy quantum devices, it is necessary to identify an appropriate regime, where the depth is large enough to generate the quantumness and not too large to lose the classical intractability by the uncorrected noise.

To be more specific about the effect of noise on the complexity, there are many results proposing the possibilities of the efficient classical simulation of noisy BS and RCS as system size scales, hindering the scalability of the quantum advantage.
For the RCS case, Refs.~\cite{Aharonov1996arxiv, Boixo2017arxiv, Gao2018efficient, Wang2021PRXQ, Deshpande2022prx} suggested a probability distribution of most quantum circuits of super-logarithmic depth with a constant level of noise (depolarizing, Pauli, etc) applied for each unit depth converges to a uniform distribution. 
Similarly, for the BS case, there have been many results~\cite{kalai2014gaussian, Renema2018arxiv, Renema2018prl, Shchesnovich2019pra, Moylett2020QST, Brod2020quantum, Oh2023noisy, Oh2023Tensor} about efficient classical algorithms to simulate noisy BS, with various noise models, such as photon loss and partial distinguishability of photons. 
Those results suggest that noisy BS can be efficiently simulated if the noise rate is sufficiently large.

As an example, for photon loss, if the output photon numbers are less than $\mathcal{O}(\sqrt{N})$ for input photon number $N$, BS becomes efficiently simulatable within a constant total variation distance~\cite{Oszmaniec2018njp, Patron2019quantum, Qi2020prl}.
Since the output photon number is given by $N_{\text{out}}=NT^{D}$ for circuit depth $D$ and a transmission rate $T$ applied at each depth, it implies the easiness of classical simulation for superlogarithmic circuit depth unless the transmission rate per depth $T$ increases with system size, which is typically unrealistic.
Therefore, according to the above discussions, circuit depth superlogarithmic with system size would possibly rule out the scalable demonstration of the quantum advantage \cite{Patron2019quantum}.

The above example clearly shows that as circuit depth grows, more noises are accumulated in the system.
Hence, an obvious way to suppress the effect of loss is to consider shallow-depth circuits and investigate if we can still maintain hardness in shallow-depth circuits.
A crucial factor in proving the hardness of approximate sampling, specifically additive-error sampling, using the state-of-the-art proof technique is the average-case \#P-hardness of output probability approximation within a certain additive imprecision (viz., on average over possible outcomes) \cite{aaronson2011computational, Bouland2019nature, hamilton2017gaussian, deshpande2021quantum, movassagh2020quantumsupremacy, kondo2021quantumsupremacy, Bouland2022noise}. 
Here, anti-concentration property~\cite{aaronson2011computational, Bouland2019nature, Bremner2016prl, Hangleiter2018anticoncentration, Bermejo2018prx, Dalzell2022prxq}, which is a measure of the flatness of probability distribution, comes into the proof of the average-case hardness up to the additive error. 
For the RCS case, recent results found that the probability distribution of logarithm depth random quantum circuit is anti-concentrated~\cite{barak2020spoofing, Dalzell2022prxq}, but poorly anti-concentrated for sub-logarithm depth~\cite{Deshpande2022prx}. 
Those results suggested hope that log-depth RCS may have the average-case hardness property and thus may be a suitable candidate for realizing scalable quantum advantage even with noisy devices.
However, more recently, Ref.~\cite{Aharonov2022noisy} presented a polynomial-time classical simulation of noisy log-depth RCS with a constant level of noise per depth using the anti-concentration property, deflating the hope.

Meanwhile, for the BS case, output probability distributions and their average-case approximation hardness for shallow-depth BS have been less studied compared to the RCS case, to the best of our knowledge.
There have been some results about efficient classical algorithms to simulate shallow-depth BS, but they can be applied only in particular circumstances.
Specifically, Ref.~\cite{Qi2022PRA} suggested a classical algorithm that can simulate 1-dimensional logarithm depth BS efficiently. 
Also, Refs.~\cite{Deshpande2018prl, Oh2022PRL} proposed classical algorithms which can efficiently simulate BS for the higher dimensional local circuits but on condition that input sources are well-separated, under a certain depth such that the well-separated input sources do not correlate much with each other. 
Although those algorithms can clarify the easiness of simulating shallow-depth BS for specific cases, it remains an open question whether the simulation hardness of shallow-depth BS exists with more general setups, e.g., allowing different circuit architectures, circuit ensembles, and input configurations. 
Hence, the motivation of our work is to investigate the possibility of shallow-depth BS that achieves simulation hardness, by analyzing the behavior of output probability distributions in low-depth regimes for general setups.

In this work, we find that the probability distribution is too concentrated to achieve the average-case hardness of probability approximation for local linear-optical circuits under a certain polynomial depth, regardless of circuit ensembles and input configurations, both for Fock-state BS (FBS) and Gaussian BS (GBS) schemes. 
We prove that from the structure of the local parallel circuit architecture, a typical way to build a random linear-optical network using geometrically local gates, most of the outcomes are forbidden at the shallow depth regime, i.e., have zero probability, regardless of circuit ensembles. 
Besides, if we employ local {\it random} circuit ensembles, their diffusive properties make probability distribution concentrated even inside the permitted outcomes, requiring additional circuit depth to get out of the concentrated regime.

Following the above examination, we propose a linear-optical circuit architecture that can resolve the above issues within logarithm depth, using geometrically non-local gates.
We numerically examine that the corresponding circuit architecture with each gate drawn from the local random unitary can well imitate the behavior of the global Haar random unitary, the requirement to achieve evidence of the average-case hardness, in a shallow depth regime.
Specifically, output probability distribution and entanglement generation of the above random circuit setup show fast convergence toward those of the global Haar random circuit with increasing circuit depth.
Those results highlight that the corresponding circuit architecture shows a potential to achieve the average-case hardness in the shallow depth regime and be utilized as an architecture for scalable quantum advantage with BS.

The paper is organized as follows.
In Sec.~\ref{averagecasehardnessintroduction}, we begin with a brief introduction of the average-case hardness of approximating output probabilities of BS and discuss the possible issues that may arise when we lower the depth.
In Sec.~\ref{limitationsoflocal}, we present problems of geometrically local architectures, which hinder the anti-concentration and thus do not allow the average-case hardness below a certain polynomial depth. 
In Sec.~\ref{nonlocalhypercubicstructure}, we propose a geometrically non-local circuit architecture that can resolve the problems we addressed within logarithm depth.
We numerically examine the probability distribution and entanglement of the circuit for various system sizes.   
In Sec.~\ref{discussion}, we conclude with a few remarks.

\section{Average-case hardness and anti-concentration}\label{averagecasehardnessintroduction}

We first briefly recall the average-case hardness of BS and investigate if a similar hardness still holds in the low-depth regime, examining potential issues that may arise when we lower the depth. 
There are two major schemes of BS, which are FBS and GBS, and we consider the average-case hardness of FBS first. 
The output probability of FBS is expressed as permanent, which is worst-case \#P-hard to compute \cite{aaronson2011computational, valiant1979complexity}.
To achieve simulation hardness of additive-error sampling, whose error is bounded by total variation distance with its size at least inverse polynomial order, additively approximating most of the output probabilities of the sampler should be \#P-hard (i.e., average-case \#P-hardness). 
This comes from the fact that a sampler with total variation distance error can have a large additive error for a single output probability, but still have a small additive error on average over all possible outcomes due to Markov's inequality.
Here, to get the average-case hardness of the output probability approximation, the implementation of random circuit instances is required.
Specifically, the current proof technique requires that unitary matrices corresponding to the random circuits must be drawn from Haar measure on U($M$), where $M$ corresponds to the total mode number.  
Under some plausible assumptions (see \cite{aaronson2011computational} for more details), approximating most of the output probabilities of FBS on average over random linear-optical circuits is \#P-hard, which can be represented as follows.
\begin{conjecture}\label{Averagehard}
For mode number $M$ and photon number $N$, approximating most output probabilities of most linear-optical circuits within additive error $\rm{poly}\it{(N)^{-1}\frac{N!}{M^N}}$ is \#P-hard. 
\end{conjecture}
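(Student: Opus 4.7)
The plan is to follow the Aaronson--Arkhipov strategy, which factors the conjecture into two pieces: a \emph{hiding} statement about Haar-random linear-optical networks and a \emph{worst-case to average-case reduction} for the permanent of Gaussian matrices. First I would recall that for a Haar-random unitary $U$ on $M$ modes and an input/output configuration specifying $N$ photons, the FBS output probability has the closed form $p_S(U) = |\text{Per}(U_S)|^2 / \prod_i s_i!$, where $U_S$ is the $N\times N$ submatrix of $U$ induced by the configuration. Hence approximating most $p_S(U)$ additively, on average over Haar-random $U$, reduces to approximating $|\text{Per}(U_S)|^2$ on average over the distribution that $U_S$ induces.

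The next step is the hiding lemma: for $M$ sufficiently large compared to $N$ (polynomially, with the best-known bound roughly $M \gtrsim N^{5+o(1)}$), the distribution of $\sqrt{M}\,U_S$ is close in total variation distance to that of an $N\times N$ matrix $X$ of i.i.d.\ standard complex Gaussians. The target additive error $\text{poly}(N)^{-1}\,N!/M^N$ is chosen precisely so that, after absorbing the $M^{-N}$ from the rescaling $U_S \mapsto \sqrt{M}\,U_S$, it becomes $\text{poly}(N)^{-1} \cdot N!$, which is (up to the polynomial factor) the expected value of $|\text{Per}(X)|^2$ for Gaussian $X$. Via a standard coupling/Markov argument, approximating most $p_S(U)$ on average over $U$ is then equivalent to approximating $|\text{Per}(X)|^2$ on average over Gaussian $X$ within a $\text{poly}(N)^{-1}$ multiple of its mean.

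With the problem recast this way, I would invoke (as an assumption) the worst-case $\#$P-hardness of approximating $|\text{Per}(X)|^2$ for Gaussian $X$ and then try to transport hardness from the worst case to the average case by polynomial interpolation. Given an arbitrary target matrix $Y$, one forms $Z(t) = (1-t)X + tY$ for a freshly sampled Gaussian $X$; then $\text{Per}(Z(t))$ is a univariate polynomial of degree $N$ in $t$, and $Z(t)$ is Gaussian-distributed (up to a known mean shift) for each small $t$. Approximating the permanent at many such small values of $t$ and error-correcting via Berlekamp--Welch-type recovery would let one extrapolate to $t=1$, converting average-case approximations into a worst-case approximation and thereby yielding $\#$P-hardness.

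The hard part, and the reason this statement is only a conjecture, is closing the reduction at the additive-error scale $\text{poly}(N)^{-1}\,N!/M^N$. Robust polynomial interpolation tolerates errors of size at most roughly $|\text{Per}(Z(t))|/\text{poly}(N)$, so extrapolation only succeeds if $|\text{Per}(X)|^2$ does not fall much below $N!/\text{poly}(N)$ for most Gaussian $X$: this is exactly the Permanent-Anti-Concentration Conjecture of Aaronson--Arkhipov, which remains open. In addition, the noise introduced by the Gaussian mean-shift along the interpolation path must be controlled without blowing up the error. Thus my proof plan proves the conjecture \emph{modulo} (i) worst-case $\#$P-hardness of Gaussian permanent approximation and (ii) anti-concentration of Gaussian permanents; unconditionally, these are the obstacles that keep the statement a conjecture rather than a theorem, and they are precisely the ingredients whose shallow-depth analogues the remainder of the paper sets out to examine.
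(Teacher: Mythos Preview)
The paper does not prove this statement: it is explicitly labelled a \emph{conjecture} and is simply quoted from Aaronson--Arkhipov with a pointer to~\cite{aaronson2011computational} for the supporting arguments and assumptions. Your proposal correctly recognises this, lays out the standard Aaronson--Arkhipov strategy (hiding into Gaussian submatrices, worst-to-average reduction via polynomial interpolation), and accurately identifies the two unresolved ingredients---the Permanent-of-Gaussians Conjecture and the Permanent Anti-Concentration Conjecture---that keep the statement conjectural. In that sense your write-up is faithful to the literature and to the paper's own stance; there is no ``paper's proof'' to compare against, and your closing remark that these open ingredients are precisely what motivates the shallow-depth analysis in the rest of the paper is exactly right.
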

The above problem is solvable within a finite level of polynomial hierarchy using the additive-error sampler, which implies that efficient classical simulation of the corresponding sampler implies polynomial hierarchy collapses to the finite level, which is highly unlikely. 

Also, a similar argument has been established for GBS in~\cite{hamilton2017gaussian, deshpande2021quantum}, whose output probability is now expressed as hafnian, which is at least as hard as permanent to compute.
Those results suggest that for fixed output photon number $N$, approximating most of the output probabilities of GBS on average over random linear-optical circuits is \#P-hard, under plausible assumptions. 
Accordingly, implementing $M$-mode random linear optical circuits corresponding to $M$-dimensional Haar random unitary matrices is necessary to find evidence of the average-case hardness of BS using the current techniques (both for FBS and GBS schemes).
However, the exact implementation of Haar random unitary matrices requires at least polynomially large circuit depth (e.g., results in Ref.~\cite{Russell2017njp}).
More precisely, as the space of $M$ dimensional unitary matrices is determined by $M^2$ independent parameters~\cite{Zyczkowski1994jpa}, $\Omega(M^2)$ number of local random unitary gates are required to implement global Haar random unitary, which can only be accomplished by $\Omega(M)$ circuit depth. 

One way to find evidence of the average-case hardness of approximating output probabilities in the low-depth regime, as we introduced in the previous section, is by investigating the anti-concentration property of the corresponding output probability distribution.
More formally, for the output probability of boson sampling $p_{\bm{s}}$ and for $\xi > 0$, the anti-concentration property can be expressed as 
\begin{equation}
    \Pr_{U}\left[ p_{\bm{s}} < \frac{\text{poly}(N,1/\xi)^{-1}}{\binom{M+N-1}{N}} \right] < \xi ,
\end{equation}
where the probability is over the unitary matrix $U$ drawn from Haar measure on U($M$) that corresponds to the circuits. Note that because of the symmetry from the Haar measure, the choice of the outcome $\bm{s}$ may be arbitrary.
Anti-concentration plays a key role in proving the current average-case hardness of various circuit families, and accordingly, there have been many efforts to show whether there exists anti-concentration for various circumstances~\cite{aaronson2011computational, Bremner2016prl, morimae2017hardness, Hangleiter2018anticoncentration, deshpande2021quantum, Dalzell2022prxq, Deshpande2022prx,fefferman2023effect}.
Although anti-concentration itself does not guarantee the classical hardness of sampling, it has been shown that classical simulation of certain circuit families becomes easy if the output probability distribution is sparse enough (namely, overly concentrated)~\cite{schwarz2013simulating, roga2020classical, bravyi2023classical}.
Those results indicate that concentrated output probability distribution makes classical simulation easier for such cases, demonstrating that anti-concentration is a crucial step in establishing the simulation hardness.

Moreover, anti-concentration proposes the minimum depth required for the current average-case hardness proof, as the lack of anti-concentration directly ruins the proof technique even though it does not necessarily rule out the classical hardness of boson sampling.
Specifically, when the output probability distribution is concentrated such that most of the output probabilities are very close to zero and easy to approximate within additive error by a trivial algorithm that always outputs the value ``0", there is no average-case hardness of approximating output probabilities.
Hence, if the output probability of BS below a certain depth is poorly anti-concentrated as described above, the corresponding depth suggest a prerequisite for the average-case hardness of output probability approximation, as we cannot achieve the average-case hardness below the depth.

\begin{figure}[t]
\includegraphics[width=\textwidth]{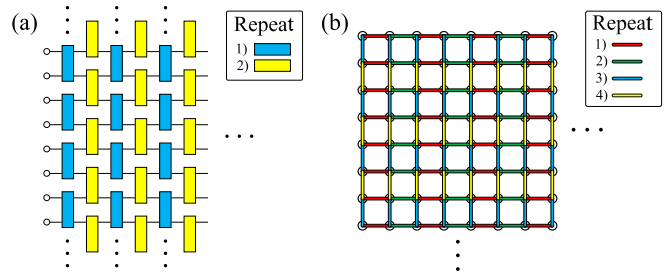}
\caption{Schematics of (a) 1-dimensional and (b) 2-dimensional local parallel circuit. Unitary gates with the same colors are applied in parallel, and the sequence is $2$ steps of the parallel application of local gates for each dimension.
Here, such application of gates with the same color implies the unit depth. 
}
\label{fig:parallel circuit}
\end{figure}

\section{Limitations of geometrically local architectures}\label{limitationsoflocal}

We consider an $M$-mode linear optical circuit characterized by unitary operator $\hat{\mathcal{U}}$. After the evolution, input mode operators of the system are linearly transformed by a unitary matrix $U$ defined as
\begin{align}
\hat{a}_i \Longrightarrow \hat{\mathcal{U}}^{\dag}\hat{a}_i\hat{\mathcal{U}} = \sum_{j=1}^{M}U_{ij}\hat{a}_{j}
\end{align}
The probability of each outcome of BS is represented by a function of the unitary matrix $U$, which corresponds to permanent and hafnian for FBS and GBS, respectively.

A typical way to construct an $M$-mode random linear-optical network is a \textit{local parallel} circuit architecture, a parallel array of geometrically local beam splitters that can interact with only nearest-neighbor modes \cite{Clements2016optica}.
The corresponding circuit architecture has often been used both for theoretical~\cite{Zhang2021npjq, Oh2022PRL, Russell2017njp} and experimental~\cite{Tang2018sciadv, zhong2020quantum, zhong2021phase, Tang2022prl} studies, as a building block for random linear-optical circuits.
There are schematics of the 1-dimensional and 2-dimensional local parallel circuits in Fig.~\ref{fig:parallel circuit}, and we can easily generalize those circuits to a $d$-dimensional local parallel circuit. 
More formally, the $d$-dimensional local parallel circuit with depth $D$ consists of $D/2d$ rounds, where a single round consists of $2d$ steps, $2$ steps of the parallel application of local gates for each dimension.
Here and throughout, we emphasize that a unit depth in our work refers to a single step of parallel application of gates at a fixed dimension, to avoid any confusion.

With a given linear optical circuit architecture, circuit lightcones can be determined by the gates composing the circuit. 
We define $L_D(i)$ as a \textit{forward} lightcone, which is a set of all modes at depth $D$ connected with input mode $i$ via the gates of the architecture. Similarly, \textit{backward} lightcone $L_D^{\dag}(i)$ is a set of all input modes connected with mode $i$ at depth $D$ along the gates. 
Intuitively, the lightcone indicates a set of modes photons can spread from a mode via the gates within a finite depth.

\begin{figure*}[bt!]
\includegraphics[width=0.9\textwidth]{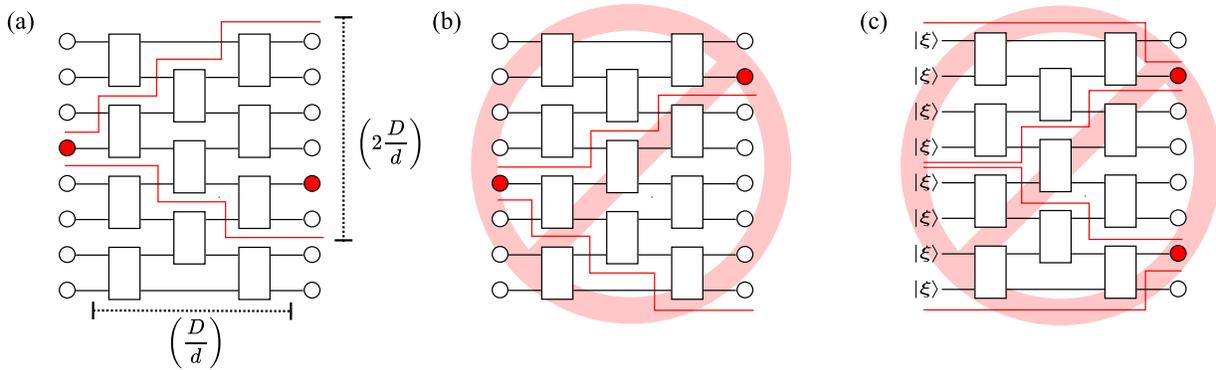}
\caption{(a) A schematic of the forward lightcone of a 1-dimensional parallel circuit. We employ a paradigmatic scheme such that the left side denotes the input and the right side denotes the output. From the parallel architecture, the size of the lightcone for each dimension is twice of depth for each dimension. (b) An example of a forbidden outcome for FBS, which does not satisfy the constraint \eqref{constraint1}: input and output photons are not connected by the lightcone. (c) An example of a forbidden outcome for GBS, which does not satisfy the constraint \eqref{constraint2}: a pair of output photons must be originated from a single source.}
\label{fig:lightcones}
\end{figure*}

To specify the circuit ensemble for random circuit instances, we also define the $d$-dimensional local parallel \textit{random} circuit, which is $d$-dimensional local parallel circuit with each gate drawn from the Haar measure on U($2$) independently.
Such configurations are motivated by recent experimental setups \cite{zhong2020quantum, zhong2021phase, Xanadu2022nature, deng2023gaussian}. 
For the following sections, we consider the $d$-dimensional local parallel (random) circuit as default and prove that the probability distribution is poorly anti-concentrated under a certain polynomial circuit depth, both for FBS and GBS schemes, implying the lack of the average-case hardness for shallow depth circuits.

\subsection{Fock-state boson sampling}
For FBS, we set the basic parameters as follows. The system is composed of total $M$ modes, and the input state is composed of $N$ single-photon states and vacuum states for the rest modes where $M$ and $N$ are polynomially related by $M = c_0N^\gamma$ with $\gamma \ge 1$. 
We use the notation $\bm{t}$ as the position of input single photons, where $\bm{t}\in \mathbb{Z}_{\geq 0}^{N}$ with $1\leq t_i\leq M$ and $t_1< t_2<\cdots< t_N$ such that each element $t_i$ denotes a mode where a photon is injected. 
Also, $\bm{s}$ denotes the output photons' position, where $\bm{s}\in \mathbb{Z}_{\geq 0}^{N}$ with $1\leq s_i\leq M$ and $s_1\le s_2\le\cdots\le s_N$ (equality denotes collision outcomes), so each element $s_i$ corresponds to a mode where a photon is measured.
Here, the number of possible patterns of $\bm{s}$ is given by $\binom{M+N-1}{N}$. 

Using those notations, the output probability of FBS to obtain the outcome $\bm{s}$ is expressed as \cite{aaronson2011computational}
\begin{equation}\label{bsprobability}
p_{\bm{s}} = \frac{\left|\text{Per}\left( U_{\bm{s},\bm{t}}\right)\right|^2}{\bm{s}!} = \frac{1}{\bm{s}!}\left|\sum_{\sigma\in S_N}\prod_{j=1}^{N} U_{s_{\sigma(j)},t_{j}} \right|^2,
\end{equation}
where $U_{\bm{s},\bm{t}}$ is a submatrix of circuit unitary matrix $U$ such that $(U_{\bm{s},\bm{t}})_{i,j} = U_{s_i, t_j}$.
Also, $\bm{s}!$ denotes a product of the multiplicity of all possible values in $\bm{s}$, and $\sigma \in S_N$ denotes permutations along $N$ modes. 

As mode interactions cannot occur outside the circuit lightcone, we have $U_{ij} = 0$ for $i \notin L_D(j)$. 
Hence, if the size of the lightcone is small such that for given $\bm{s}$, $s_{\sigma(j)} \notin L_D(t_{j})$ for all $\sigma \in S_N$ and at least one $j \in [N]$, this implies that the given outcome $\bm{s}$ is unphysical, suggesting $p_{\bm{s}} = 0$.
We can check a simple example of a forbidden outcome of FBS in Fig.~\ref{fig:lightcones}(b). 
Intuitively, the number of unphysical outcomes would increase if the lightcone size gets smaller, as photon propagation along different modes becomes more restricted.

From the structure of the local parallel architecture, we find that the size of the lightcone $ \left|L_D(i)\right|$ grows as
\begin{equation}\label{upperboundoflightcone1}
 \left|L_D(i)\right| \le \left(\frac{2D}{d}\right)^d,
\end{equation}
where $D$ denotes circuit depth and $d$ denotes circuit dimension. 
The right-hand side of Eq.~\eqref{upperboundoflightcone1} comes from the fact that the size of the lightcone is $2D/d$ for each dimension as we defined a unit depth as a single step of parallel application of gates (see Fig.~\ref{fig:lightcones}(a)). Equality holds when the lightcone does not meet geometrical boundaries.

As we can see, for local parallel architecture, lightcones grow polynomially with depth by Eq.~(\ref{upperboundoflightcone1}). 
Hence, a large portion of outcomes might be unphysical and thus have zero output probability for the low-depth regime (e.g., logarithm depth), as photons are localized inside lightcones of each input mode, whose sizes are very small in this regime. 
Following the above intuition, we first prove that only an exponentially small portion of the outcomes of FBS is allowed under a certain polynomial depth, regardless of input configurations and ensembles of gates composing the circuit. 
\begin{theorem}\label{theorem1}
(Any circuit ensemble) For $d$-dimensional local parallel circuit of depth $D \le \kappa_0N^{\frac{\gamma - 1}{d}}$ and arbitrarily chosen input mode configuration,  most of the outcomes of FBS have zero output probability for a constant $\kappa_0 = e^{\frac{1}{d}}c_{0}^{\frac{1}{d}}d/{2}$.
\end{theorem}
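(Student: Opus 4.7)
\textit{Proof Plan.} The plan is to count the outcomes $\bm{s}$ with $p_{\bm{s}}>0$ (call these ``allowed''), using only the lightcone structure, and show this count is a vanishing fraction of $\binom{M+N-1}{N}$ whenever $D\le\kappa_0 N^{(\gamma-1)/d}$. First I would turn $p_{\bm{s}}>0$ into a graph-matching condition. Since $U_{ij}=0$ whenever $i\notin L_D(j)$, the permanent formula \eqref{bsprobability} forces $p_{\bm{s}}=0$ unless there exists $\sigma\in S_N$ with $s_{\sigma(j)}\in L_D(t_j)$ for every $j$, i.e.\ unless the bipartite graph with edges $\{(t_j,s_i):s_i\in L_D(t_j)\}$ admits a perfect matching with the multiset $\{s_1,\ldots,s_N\}$. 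This criterion depends only on circuit geometry, so the resulting count is automatically independent of the gate ensemble and of the input configuration $\bm{t}$, as the theorem requires.

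Next I would upper bound the allowed count by a tuple count. For each allowed $\bm{s}$, fix any valid matching and form the tuple $(u_1,\ldots,u_N)\in L_D(t_1)\times\cdots\times L_D(t_N)$; distinct multisets $\bm{s}$ yield disjoint such tuples, hence
\begin{equation}\label{eq:tuplebound_plan}
\#\{\bm{s}:p_{\bm{s}}>0\}\;\le\;\prod_{j=1}^N|L_D(t_j)|\;\le\;\left(\frac{2D}{d}\right)^{dN},
\end{equation}
where the last step uses the lightcone bound \eqref{upperboundoflightcone1}. Combining with Stirling's estimate $\binom{M+N-1}{N}\ge M^N/N!\gtrsim (eM/N)^N/\sqrt{2\pi N}$ (valid because $M=c_0N^\gamma\ge N$) gives
\begin{equation}\label{eq:ratio_plan}
\frac{\#\{\bm{s}:p_{\bm{s}}>0\}}{\binom{M+N-1}{N}}\;\lesssim\;\sqrt{2\pi N}\,\Bigl[\frac{(2D/d)^d}{ec_0N^{\gamma-1}}\Bigr]^{N}.
\end{equation}
The bracket equals $1$ precisely when $D=(d/2)(ec_0)^{1/d}N^{(\gamma-1)/d}=\kappa_0 N^{(\gamma-1)/d}$, which reproduces the stated constant. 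For any $D$ strictly below this threshold the bracket is $<1$ and the exponential factor dominates the polynomial prefactor, so the fraction of allowed outcomes is $o(1)$.

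The principal obstacle is keeping Stirling's subleading factors under control so that the $e^{1/d}$ inside $\kappa_0$ comes out exactly: this factor originates solely from the $e^N$ in $N^N/N!$, which is what makes $\binom{M+N-1}{N}$ behave like $(eM/N)^N$ up to polynomial corrections. A secondary subtlety is that \eqref{eq:tuplebound_plan} overcounts whenever a given $\bm{s}$ admits several valid matchings in the bipartite lightcone graph; this looseness only makes the true allowed fraction smaller, so the conclusion of the theorem is unaffected, but it does suggest the threshold $\kappa_0$ may be conservative. Finally, because \eqref{eq:ratio_plan} decays only for depths strictly below $\kappa_0 N^{(\gamma-1)/d}$, the boundary case requires either interpreting ``most'' asymptotically or a mildly sharper counting argument to absorb the residual $\sqrt{N}$ prefactor.
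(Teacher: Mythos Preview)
Your plan is correct and essentially identical to the paper's proof: the paper also reduces $p_{\bm{s}}>0$ to the matching condition $\exists\sigma\,\forall j:\,s_{\sigma(j)}\in L_D(t_j)$, bounds the allowed count by $\prod_j|L_D(t_j)|\le(2D/d)^{dN}$, and divides by $\binom{M+N-1}{N}$ via Stirling to obtain $\Delta<3\sqrt{N}\bigl(2^dD^dN^{1-\gamma}/(ed^dc_0)\bigr)^N$, yielding the same threshold $\kappa_0$. Your observation about the boundary case $D=\kappa_0N^{(\gamma-1)/d}$ leaving a stray $\sqrt{N}$ prefactor is accurate and is a point the paper itself glosses over with the phrase ``asymptotically exponentially small''.
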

\begin{proof}
From Eq.~\eqref{bsprobability}, the probability is non-zero only if at least one path exists between each input and output mode within each lightcone, and this statement can be written as
\begin{equation}\label{constraint1}
    \exists \sigma \in S_N \; \forall j \in [N] \; \text{~such that~}\; s_{\sigma(j)} \in L_D(t_{j}).
\end{equation}
Let $\mathcal{M}$ be the number of outcomes $\bm{s}$ satisfying the constraint (\ref{constraint1}). Then $\mathcal{M}$ is bounded by
\begin{equation}\label{upperboundofN}
\mathcal{M} \le \prod_{i=1}^{N}\left|L_D(t_i)\right|.
\end{equation}
The right-hand side of Eq.~(\ref{upperboundofN}) counts the outcome by one-to-one correspondence from the input to the output, so at least one output photon is present at each input lightcone, satisfying the constraint (\ref{constraint1}). 
Clearly, it is an upper bound for any input configuration $\bm{t}$, because it double-counts the output photon distribution inside the overlapping region of lightcones as photons in the region are indistinguishable. 
In addition, the inequality at Eq.~(\ref{upperboundofN}) becomes tighter as the more $L_D(t_i)$ do not overlap each other, which corresponds to input configurations that photon sources are far from each other (e.g., setups from Refs.~\cite{Deshpande2018prl, Oh2022PRL}).

Here we note that $\mathcal{M}$ can be exponentially large in terms of system size, but still, it can be much smaller than all possible output patterns (i.e., $\binom{M+N-1}{N}$).
The ratio of permitted outcomes over all possible outcomes, which we will denote as $\Delta$, is bounded by
\begin{align}
    \Delta &= \frac{\mathcal{M}}{\binom{M+N-1}{N}} \\
&\le \sqrt{2\pi N}e^{\frac{1}{12N}}\left(\frac{2^dD^dN}{ed^dM}\right)^N \\
&< 3\sqrt{N}\left(\frac{2^dD^dN^{1-\gamma}}{ed^dc_0}\right)^N,
\end{align}
where we used Stirling's formula and the fact that $N$ is larger than the unity. 
Here, we only employed the structure of the local parallel architecture (viz., maximal lightcone size), and did not take into account the ensemble the circuit follows. 
Therefore, the result shows that for depth $D \le \kappa_0N^{\frac{\gamma - 1}{d}}$ with a constant $\kappa_0 = e^{\frac{1}{d}}c_{0}^{\frac{1}{d}}d/{2}$, asymptotically $\Delta$ is exponentially small with system size, regardless of input configuration $\bm{t}$ and circuit ensemble. 
\end{proof}

Theorem \ref{theorem1} implies that for circuits with a depth below a certain threshold, most output instances have zero probabilities for {\it any} circuits. 
Therefore, most of the output probabilities are easy to approximate for any circuit instance, which undermines the average-case hardness of the output probability approximation for any circuit ensemble. 

Furthermore, many current experiments compose their circuits by choosing local beam splitters randomly, i.e., the local parallel random circuit we introduced.
In that case, we find that most photons propagate in a region smaller than the actual lightcone, requiring additional circuit depth to get out of the easiness regime.

\begin{theorem}\label{theorem2}
(Local random ensemble) For $d$-dimensional local parallel random circuit of depth $D\le \alpha_0 N^{\frac{2(\gamma-1)}{d}-\lambda}$ with a constant $\alpha_0 = e^{\frac{2}{d}}c_{0}^{\frac{2}{d}}\beta d/{2}$, for any $\lambda > 0$, $0 < \beta < 1$, and input mode configuration, it is easy to estimate the output probabilities of FBS within additive error $\epsilon = \rm{poly}\it{(N)^{-1}\frac{N!}{M^N}}$, for $1-\xi$ portion of output instances with probability $1-\delta$ over the random circuit instances, where $\xi$ and $\delta$ are exponentially small with system size.
\end{theorem}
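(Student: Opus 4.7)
My plan is to establish the theorem by showing that the trivial classical estimator that always outputs $0$ achieves additive error $\epsilon$ for most outcomes with high probability over the circuit ensemble. To make this work, I need to demonstrate that the output probability distribution of the local parallel random circuit is concentrated on a set of outcomes that constitutes only an exponentially small fraction of $\binom{M+N-1}{N}$, where the relevant length scale is now the diffusive spread $\sqrt{D/d}$ per dimension rather than the ballistic spread $2D/d$ that governed Theorem~\ref{theorem1}. This diffusive refinement of the lightcone is the reason the exponent $(\gamma-1)/d$ in Theorem~\ref{theorem1} becomes $2(\gamma-1)/d$ in Theorem~\ref{theorem2}, and it matches the form of $\alpha_0$, which is the Theorem~\ref{theorem1} constant with $d \to d/2$ in the diffusive exponents and a free parameter $\beta$ inserted to play the role of the slack in a concentration inequality.

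First I would define an \emph{effective} forward lightcone $\tilde{L}_D(i) \subset L_D(i)$ of size $\lvert \tilde{L}_D(i) \rvert \leq (c_1 D/d)^{d/2}/\beta$ (for some absolute constant $c_1$), corresponding to the region a photon occupies with overwhelming probability under the classical random walk induced by taking $\mathbb{E}_U[|U_{ij}|^2]$. On a $d$-dimensional local parallel random circuit, this expectation exactly realizes a nearest-neighbor random walk on the mode lattice because Haar-random $\mathrm{U}(2)$ gates satisfy $\mathbb{E}[|U_{ij}|^2] = 1/2$. Standard Chernoff-type tail bounds for random walks then give $\sum_{j \notin \tilde{L}_D(i)} \mathbb{E}_U[|U_{ij}|^2] \leq e^{-\Omega(N^{\lambda'})}$ for an appropriate $\lambda'>0$ tied to $\lambda$.

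Next, I would lift this single-photon bound to the $N$-photon permanent by expanding $\mathbb{E}_U[|\mathrm{Per}(U_{\bm{s},\bm{t}})|^2]$ as a sum over pairs of permutations $(\sigma,\tau)\in S_N \times S_N$ and bounding each term using the single-photon expectations above. The key estimate is that if $\bm{s}$ is not contained in $\bigcup_j \tilde{L}_D(t_j)$ in the matched sense (i.e., no permutation $\sigma$ places each $s_{\sigma(j)}$ inside $\tilde{L}_D(t_j)$), then at least one factor in every term of the double-sum is exponentially small, yielding $\mathbb{E}_U[p_{\bm{s}}] \leq e^{-\Omega(N^{\lambda'})} \cdot \epsilon$. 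Applying Markov's inequality then gives $\Pr_U[p_{\bm{s}}>\epsilon] \leq \delta$ with $\delta$ exponentially small for every such $\bm{s}$.

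Finally, I would count the outcomes that \emph{are} contained in the effective lightcone structure. Repeating the argument of Theorem~\ref{theorem1} with $\lvert L_D(t_i)\rvert$ replaced by $\lvert \tilde{L}_D(t_i)\rvert \leq (c_1 D/d)^{d/2}/\beta$, the effective fraction becomes
\begin{equation}
\tilde{\Delta} \;\lesssim\; \sqrt{N}\left(\frac{c_1^{d/2} D^{d/2} N^{1-\gamma}}{e\, d^{d/2}\, \beta\, c_0}\right)^{N},
\end{equation}
which is exponentially small in $N$ precisely for $D \leq \alpha_0 N^{2(\gamma-1)/d-\lambda}$ with the stated $\alpha_0$ and any $\lambda>0$. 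Setting $\xi = \tilde{\Delta}$ then yields the theorem after a union bound. The main technical obstacle I anticipate is the second step: controlling the full double-sum over $(\sigma,\tau)$ in the second moment of the permanent. Naively, there are $(N!)^2$ terms, so the per-term exponential smallness must beat this combinatorial blow-up. I expect this works because $N! \leq e^{N\ln N}$ while the diffusive suppression is doubly exponential in the relevant length scale, but carefully identifying the dominant pairing structure (presumably the diagonal $\sigma=\tau$ contributions) and verifying that cross-terms are not worse is the place where the proof will require the most care.
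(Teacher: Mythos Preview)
Your overall architecture---define an effective lightcone of diffusive size, show the probability mass escaping it is tiny, then repeat the counting of Theorem~\ref{theorem1} with the smaller lightcone---matches the paper's. The crucial divergence is in how the single-photon diffusive bound is lifted to the $N$-photon output probability, and your proposed route has a real gap there.

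You want to bound $\E_U\!\bigl[\lvert\operatorname{Per}(U_{\bm{s},\bm{t}})\rvert^2\bigr]$ directly by expanding over $(\sigma,\tau)\in S_N\times S_N$ and arguing each term carries a small factor. Two problems. First, the expectation of a product of $2N$ matrix entries does not factor into single-photon transition probabilities $\E_U[|U_{ij}|^2]$, because the entries share local gates; the ``at least one factor is exponentially small'' heuristic is not a bound on the actual correlated moment. Second, and more fatally, the suppression you obtain per term is $e^{-\Omega(N^{\lambda'})}$ (a random-walk tail at distance $l\sim\sqrt{N^{\lambda}D/d}$ gives $e^{-cl^2/D}=e^{-cN^{\lambda}}$), which is \emph{stretched}-exponential in $N$, not doubly exponential. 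It does not beat $(N!)^2=e^{\Theta(N\log N)}$ for small $\lambda$, and the theorem must hold for arbitrarily small $\lambda>0$. So the obstacle you flag is not a matter of extra care---it is a genuine barrier for this line of attack.

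The paper sidesteps the permanent second-moment computation entirely. It first establishes a \emph{high-probability} (over $U$) bound on the leakage $\eta_i(l)=\sum_{j\notin\tilde L_D(i)}|U_{ji}|^2$ itself, not merely on its expectation: for $l\ge\sqrt{2N^{\lambda}D/(\beta d)}$ one has $\eta_i(l)\le e^{-N^{\lambda}}$ with failure probability $\delta\le 2d\,e^{(1-1/\beta)N^{\lambda}}$. Conditioned on this event, the truncated matrix $\tilde U$ satisfies $\lVert U-\tilde U\rVert_F^2\le\poly(N)e^{-N^{\lambda}}$, and then the result of Arkhipov~\cite{Arkhipov2015PRA} bounds the total variation distance between the full boson-sampling distributions of $U$ and $\tilde U$ by $\poly(N)\lVert U-\tilde U\rVert_F$. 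This TVD bound is deterministic once the leakage event holds, handles all $N$ photons at once, and never touches the second moment of a permanent. From there, the sum of probabilities over outcomes outside the effective lightcone is at most the TVD, and a single Markov inequality over outcomes (not a union bound) controls the fraction $\Gamma$ of those outcomes with $p_{\bm s}>\epsilon$. The missing ingredient in your plan is precisely this matrix-truncation-plus-TVD step; once you have it, the rest of your outline goes through with $\xi=\Delta'+\Gamma$.
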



The term `easy' means that we can approximate the probability well within allowed additive error $\epsilon$ by a trivial algorithm that outputs the value ``0".
From the diffusive properties of the local parallel random circuit studied by Ref.~\cite{Oh2022PRL, Zhang2021npjq}, we find that the probability distribution is still poorly anti-concentrated below a certain depth which is larger than $\kappa_0N^{\frac{\gamma - 1}{d}}$.

To sketch the proof of Theorem \ref{theorem2}, we define the `effective' lightcone such that truncating the outcomes that at least one photon propagates outside this lightcone results in exponentially small total variation distance from the original distribution, below a certain polynomial depth. 
In this regime, most of the probability instances are easy to estimate, including the unphysical outcomes and the truncated outcomes (though not unphysical) outside the effective lightcones. 

\begin{proof}
See Appendix \ref{theorem2proof}.
\end{proof}

\subsection{Gaussian boson sampling}
In this section, we consider an $M$-mode GBS with a product of $K$ single-mode squeezed vacuum (SMSV) states with equal squeezing and vacuum states for the rest modes. 
We focus on the output photon number $2n$, i.e., there are $n$ pairs of output photons. 
Now $M$ and $n$ are polynomially related by $M = c_1n^\gamma$ where $\gamma \ge 1$, and let $K$ be any number satisfying $n\le K\le M$. 
For convenience, we adjust the squeezing parameter such that the mean photon number for our setup is also $2n$, which implies $2n = K\sinh^2{r}$.
Similarly from the FBS case, we use the notation $\bm{t}$ as the input vector, where $\bm{t}\in \mathbb{Z}_{\geq 0}^{K}$ with $1\leq t_i\leq M$ and $t_1< t_2<\cdots< t_K$ such that each element $t_i$ denotes a mode where SMSV is injected. Also, $\bm{s}$ denotes an output vector, where $\bm{s}\in \mathbb{Z}_{\geq 0}^{2n}$ with $1\leq s_i\leq M$ and $s_1\le s_2\le\cdots\le s_{2n}$. 

Using the above conventions, the probability of obtaining outcome $\bm{s}$ for GBS is proportional to
\begin{align}
&p_{\bm{s}} \propto \left|\text{Haf}\left( \left(U {I}_K U^T\right)_{\bm{s}}\right)\right|^2 \nonumber \\&= \left|\sum_{\mu\in\text{PMP}}\prod_{j=1}^{n}\left(\sum_{k=1}^{M}\sum_{l=1}^M U_{s_{\mu(2j-1)}k}({I}_K)_{k,l} U_{s_{\mu(2j)}l}\right) \right|^2,
\end{align}
where ${I}_K$ denotes the projection matrix to the input $K$ modes, and PMP denotes all possible perfect matching permutations along $2n$ modes.

Similar to the FBS case, unphysical outcomes are naturally forbidden as $p_{\bm{s}}=0$.
More precisely, If the size of the lightcone is small such that for given $\bm{s}$, $\sum_{k}\sum_{l} U_{s_{\mu(2j-1)}k}({I}_K)_{k,l}U_{s_{\mu(2j)}l} = 0$ for all $\mu \in \text{PMP}$ and at least one $j\in[n]$, then the outcome $\bm{s}$ is forbidden. 
This phenomenon comes from the fact that the input state is a product of SMSV which is a superposition of even number Fock-states, i.e., photons are always generated as a pair. 
Therefore, if there exists an output photon such that its backward lightcone does not share the input source with backward lightcones of the other photons, the corresponding outcome is unphysical and thus prohibited.
We can check an example of a forbidden outcome of GBS in Fig.~\ref{fig:lightcones} (c).

Following the intuition, we prove that only an exponentially small portion of the outcomes of GBS is allowed under a certain polynomial depth, regardless of input configurations and ensembles of gates composing the circuit.
\begin{theorem}\label{theorem3}
(Any circuit ensemble) For $d$-dimensional local parallel circuit of depth $D \le \kappa_1n^{\frac{\gamma - 1}{d}}$, arbitrary $K$ within $n\le K\le M$ and arbitrarily chosen input mode configuration, most of the outcomes of GBS have zero output probability for a constant $\kappa_1= c_1^{\frac{1}{d}}e^{\frac{1}{d}}d/2^{\frac{1}{d}+2}$. 
\end{theorem}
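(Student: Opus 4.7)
The plan is to carry out the same lightcone-counting argument as in Theorem~\ref{theorem1}, but with pairs of output photons replacing individual photons, since SMSV sources emit photons only in pairs. First I would make the non-vanishing condition precise: a GBS outcome $\bm{s}$ has $p_{\bm{s}}\ne 0$ only if there exists a perfect matching $\mu\in\mathrm{PMP}$ such that for every pair index $j\in[n]$ there is a common source index $k\in[K]$ with $s_{\mu(2j-1)},s_{\mu(2j)}\in L_D(t_k)$; equivalently, with $t_k\in L_D^{\dagger}(s_{\mu(2j-1)})\cap L_D^{\dagger}(s_{\mu(2j)})$. Any $\bm{s}$ failing this for every matching is forbidden, as illustrated by Fig.~\ref{fig:lightcones}(c).

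Next I would upper bound the number of permitted outcomes $\mathcal{M}$ by counting ordered sequences of $n$ compatible pairs. For a single pair the first mode can be any of at most $M$ positions, while the partner must lie in an $\ell_{\infty}$-ball of radius $2D/d$ around it, because the backward lightcones of the local parallel architecture are $\ell_{\infty}$-balls of radius $D/d$ and two such hypercubes intersect only when their centres are within $2D/d$ in every coordinate. The partner neighbourhood therefore contains at most $(4D/d)^{d}=2^{d}(2D/d)^{d}$ modes, and this bound is uniform in $K$ since restricting the set of active sources can only shrink the neighbourhood. Dividing by $n!\cdot 2^{n}$ to remove the over-counting from the order of pairs and the order within each pair yields
\begin{equation}
\mathcal{M}\;\le\;\frac{\bigl(M\cdot 2^{d}(2D/d)^{d}\bigr)^{n}}{n!\,2^{n}},
\end{equation}
a bound that holds uniformly over every $K\in[n,M]$ and every input configuration $\bm{t}$.

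Then I would compare $\mathcal{M}$ with the total number of outcomes $\binom{M+2n-1}{2n}\ge M^{2n}/(2n)!$ and use Stirling to collapse $(2n)!/n!\sim \sqrt{2}\,(4n/e)^{n}$, which reduces the ratio to
\begin{equation}
\Delta\;\le\;O(1)\cdot\left(\frac{2^{2d+1}\,D^{d}\,n^{1-\gamma}}{d^{d}\,e\,c_{1}}\right)^{n}.
\end{equation}
This is exponentially small in $n$ precisely when the quantity in parentheses is strictly below $1$; solving that inequality for $D$ gives $D\le\kappa_{1}n^{(\gamma-1)/d}$ with $\kappa_{1}=c_{1}^{1/d}e^{1/d}d/2^{1/d+2}$, matching the claimed constant.

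The main obstacle I anticipate is choosing the right bookkeeping. A naive enumeration of (outcome, matching, source) triples over-counts badly when $K$ is close to $M$, since a single pair can then be attributed to many distinct sources, and one only recovers the weaker scaling $n^{(\gamma-1)/(2d)}$. The remedy is to count only (outcome, matching) and to enumerate the partner directly through the pair-compatibility neighbourhood, never naming the source; once the quantifiers are arranged in this order, the exponent $(\gamma-1)/d$ and the explicit $\kappa_{1}$ fall out of the Stirling step, and the resulting bound is automatically uniform in $K$ and in the input placement $\bm{t}$.
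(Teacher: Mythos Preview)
Your proposal is correct and follows essentially the same route as the paper. Both arguments relax to the worst case $K=M$ (you phrase this as ``never naming the source''), enumerate permitted outcomes by counting compatible pairs through the doubled-lightcone neighbourhood of size $(4D/d)^d$, and then apply Stirling to reach the identical bound $\Delta\le O(1)\bigl(2^{2d+1}D^{d}n^{1-\gamma}/(d^{d}e\,c_1)\bigr)^{n}$; the only cosmetic difference is that you count ordered sequences and divide by $n!\,2^{n}$, whereas the paper indexes the ``first'' member of each pair by a multiset of size $n$ (contributing $\binom{M+n-1}{n}$) and divides only by $2^{n}$.
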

\begin{proof}
The probability of GBS is non-zero only if $2n$ output photons are paired up and share the input source inside their backward lightcones, and this statement can be written as
\begin{align}\label{constraint2}
    \exists \mu  \; \forall j \; \exists i \; \text{~such that~} \; t_i \in  L_D^{\dag}(s_{\mu(2j-1)}) \cap L_D^{\dag}(s_{\mu(2j)})  .
\end{align}
Since we want to upper-bound the number of permitted outcomes, we alleviate the constraint \eqref{constraint2} by taking $K = M$, which denotes the full-mode input. Then, from \eqref{constraint2}, the constraint for the index $i$ vanishes and the last term reduces to $L_D^{\dag}(s_{\mu(2j-1)}) \cap L_D^{\dag}(s_{\mu(2j)}) \neq \emptyset$ . Then we can approximately count the number of possible outcomes as follows: $(i)$ Pick $n$ pairs over $M$ modes with replacement. $(ii)$ Count all possible alignment for each $n$ pair connected by lightcones. 

Let $\mathcal{M}$ be the number of outcomes $\bm{s}$ satisfying the constraint \eqref{constraint2}. Then $\mathcal{M}$ is bounded by
\begin{equation}\label{numberofpossibleoutcome2}
\mathcal{M} \le \frac{1}{2^n}\sum_{\textbf{r}}\prod_{i=1}^{n}\left|L_D \circ L_D^{\dag}(r_i)\right|,
\end{equation}
where $\bm{r}\in \mathbb{Z}_{\geq 0}^{n}$ with $1\leq r_i\leq M$ and $r_1\le r_2\le\cdots\le r_n$ denotes the possible mode selection at step $(i)$ so the number of possible alignment of $\bm{r}$ is $\binom{M+n-1}{n} $. The factor $\frac{1}{2^n}$ comes from the indistinguishability of the photons between step $(i)$ and $(ii)$. As we allowed double counting at overlapping regions and weakened the constraint \eqref{constraint2} by taking $K = M$, the right-hand side of Eq.~\eqref{numberofpossibleoutcome2} certainly implies an upper bound.

From the structure of $d$-dimensional parallel circuit architecture, with generalization of Eq.~\eqref{upperboundoflightcone1}, $|L_D \circ L_D^{\dag}(r_i)|$ has the corresponding bound
\begin{equation}\label{upperboundoflightcone2}
\left|L_D \circ L_D^{\dag}(r_i)\right| \le \left(\frac{4D}{d}\right)^d,
\end{equation}
where equality holds when lightcones do not meet geometrical boundaries. Now $\Delta$ is the ratio of outcomes satisfying \eqref{constraint2} over all possible output distributions, and it is bounded by 
\begin{align}
\Delta &= \frac{\mathcal{M}}{\binom{M + 2n - 1}{2n}} \\
&\le \frac{1}{2^n}\left[4^d\left(\frac{D}{d}\right)^d\right]^n\frac{\binom{M+n-1}{n}}{\binom{M + 2n - 1}{2n}} \\
&\le \sqrt{2}e^{\frac{1}{24n}}\left( \frac{2^{2d+1}D^d}{ed^d}\frac{n}{M} \right)^n \\
&< 2\left( \frac{2^{2d+1}}{ed^dc_1}D^dn^{1-\gamma} \right)^n.
\end{align} 
We used Stirling's formula and the fact that $n$ is larger than the unity. 
This result shows that for depth $D \le \kappa_1n^{\frac{\gamma - 1}{d}}$ with $\kappa_1 = c_1^{\frac{1}{d}}e^{\frac{1}{d}}d/2^{\frac{1}{d}+2}$, $\Delta$ is exponentially small with system size, regardless of $K$ in $n\le K\le M$ and distribution of $\bm{t}$. 
Also, as we only used the structure of the local parallel architecture, the result holds for any circuit ensemble. 
\end{proof}

Theorem \ref{theorem3} states that regardless of the number of input SMSV states and their configuration, most of the outcomes of GBS have zero probabilities which are easy to approximate under a certain degree of polynomial depth, for any circuit instances.
Moreover, for the case of the local parallel random circuit, we prove that the argument from the proof of Theorem \ref{theorem2} holds even for the GBS scheme, requiring additional depth to get out of a poorly anti-concentrated regime.

\begin{theorem}\label{theorem4}
(Local random ensemble) For $d$-dimensional local parallel random circuit of depth $D\le \alpha_1 n^{\frac{2(\gamma-1)}{d}-\lambda}$ with $\alpha_1 = c_{1}^{\frac{2}{d}}e^{\frac{2}{d}}\beta d/{2}^{\frac{2}{d}+3}$, for any $\lambda > 0$,$0<\beta<1$, $n\le K\le M$ and input mode configuration, it is easy to estimate the output probabilities of GBS within additive error $\epsilon = \rm{poly}\it{(n)^{-1}\frac{(2n)!}{M^{2n}}}$ for $1-\xi$ portion of output instances with probability $1-\delta$ over the random circuit instances, where $\xi$ and $\delta$ are exponentially small with system size.
\end{theorem}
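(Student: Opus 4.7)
The proof plan closely mirrors that of Theorem~\ref{theorem2}, with the permanent-based analysis replaced by hafnian-based bounds appropriate to GBS. The overall goal is to introduce an \emph{effective} lightcone $\tilde L_D(i)\subset L_D(i)$ strictly smaller than the geometric bound $(2D/d)^d$ and to show two things: (i) with probability at least $1-\delta$ over the random circuit, the truncated distribution supported only on outcomes whose photons all lie inside some $\tilde L_D(t_k)$ differs from the true distribution by exponentially small total variation; (ii) the number of such ``non-leaky'' outcomes is still an exponentially small fraction of $\binom{M+2n-1}{2n}$. Together these imply, via Markov's inequality, that the trivial estimator $0$ lies within the additive error $\epsilon=\mathrm{poly}(n)^{-1}(2n)!/M^{2n}$ on a $1-\xi$ fraction of outcomes.

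First I would import the diffusive behavior of local parallel random circuits established in Refs.~\cite{Oh2022PRL, Zhang2021npjq}: in expectation, $|U_{ij}|^2$ spreads roughly uniformly over $L_D(i)$, so the $\ell^2$-mass of a column of $U$ concentrates in a subregion of size $|L_D(i)|/n^{\lambda d/2}$ up to $\beta$-dependent constants. A union bound combined with Markov yields that, with probability $1-\delta$ over the circuit, every column has at most exponentially small $\ell^2$-mass outside the corresponding $\tilde L_D(i)$. Next, I would bound the GBS probability of any ``leaky'' outcome, i.e., one with at least one $s_j$ outside every $\tilde L_D(t_k)$. Expanding
\begin{equation*}
p_{\bm s} \propto \Bigl|\sum_{\mu\in\mathrm{PMP}} \prod_{j=1}^{n}\sum_{k=1}^{K} U_{s_{\mu(2j-1)}\,t_k} U_{s_{\mu(2j)}\,t_k}\Bigr|^{2},
\end{equation*}
I would apply Cauchy--Schwarz to the inner $k$-sum so that each pair-factor is bounded by a product of row $\ell^2$-masses of $U$, and then dominate the sum over the $(2n-1)!!$ matchings by the triangle inequality. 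Any leaky outcome then carries an exponentially small factor inherited from the row whose output mode escapes $\tilde L_D$, the combinatorial overhead being absorbed into the constant $\beta<1$.

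Finally, I would repeat the counting in Theorem~\ref{theorem3} with $|L_D\circ L_D^\dagger(r_i)|$ replaced by the effective version of size roughly $(4D/d)^d\, n^{-\lambda d}$, obtaining
\begin{equation*}
\tilde\Delta \lesssim \Bigl(\frac{2^{2d+1}\,D^d\,n^{1-\gamma}}{e\,d^d\,c_1\, n^{\lambda d/2}\,\text{const}}\Bigr)^{n},
\end{equation*}
which is exponentially small exactly when $D\le \alpha_1 n^{2(\gamma-1)/d-\lambda}$, reproducing the claimed threshold. Combining this with the negligible leaky contribution and the forbidden outcomes already ruled out by Theorem~\ref{theorem3} then finishes the argument. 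The main obstacle I expect is obtaining a sufficiently tight bound on the pair-factors: unlike the FBS permanent, whose entries factor into a single product of $N$ amplitudes, the hafnian involves an internal sum over the $K$ input modes inside each matched pair, so a naive Cauchy--Schwarz may be too loose to reach the exponent $2(\gamma-1)/d-\lambda$. The technical crux is thus a careful second-moment analysis of these pair-factors, generalizing the amplitude concentration argument that underlies Theorem~\ref{theorem2}, and properly controlling how the $K$-dependent projection $I_K$ interacts with the effective lightcone structure so that the same threshold holds uniformly for all $n\le K\le M$.
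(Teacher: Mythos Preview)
Your overall architecture---define an effective lightcone, show the mass outside it is exponentially small with high probability, count the outcomes inside it as in Theorem~\ref{theorem3}, then use Markov's inequality to control the fraction of leaky outcomes with probability exceeding $\epsilon$---is exactly the paper's. The difference lies in how you propose to establish step~(i).

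You plan to bound the probability of each leaky outcome \emph{directly}, by expanding the hafnian, applying Cauchy--Schwarz to each pair-factor $\sum_k U_{s_{\mu(2j-1)}t_k}U_{s_{\mu(2j)}t_k}$, and summing over the $(2n-1)!!$ matchings. You correctly anticipate that this may be too loose: the internal $k$-sum and the combinatorial blowup make it unclear whether the exponent $2(\gamma-1)/d-\lambda$ survives. The paper avoids this entirely. It never bounds individual leaky probabilities via hafnian structure; instead it invokes the operator-level total-variation-distance bound from Ref.~\cite{Arkhipov2015PRA}, which says that for \emph{any} passive linear-optical input the TVD between the output distributions of $U$ and its truncation $\tilde U$ is at most $\mathrm{poly}(n)\|U-\tilde U\|_F^2$. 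Since $\|dU\|_F^2=\sum_i\eta_i(l)$ is already controlled by the diffusive estimate, the sum of all leaky probabilities is $\mathrm{poly}(n)\exp(-n^\lambda)$ without ever opening the hafnian. Markov's inequality then converts this into a bound on the fraction $\Gamma$ of leaky outcomes with $p_{\bm s}>\epsilon$, exactly as in the FBS proof. So the ``technical crux'' you worry about simply does not arise.

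One genuine point you miss: the truncation argument bounds probabilities over \emph{all} photon-number sectors, whereas the theorem concerns the distribution conditioned on output photon number $2n$. The paper handles this by observing that with the squeezing tuned so that the mean photon number equals $2n$, the sector probability $P(2n)=\binom{K/2+n-1}{n}\tanh^{2n}r/\cosh^K r$ is $\Omega(n^{-1/2})$, so normalizing to the $2n$-photon subspace costs only a polynomial factor and the exponential smallness survives. Without this step your argument would not uniformly cover all $n\le K\le M$.
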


\begin{proof}
See Appendix \ref{theorem4proof}.
\end{proof}

We remark that since the anti-concentration property is a necessary condition for the existing simulation hardness proof technique, the lack of anti-concentration itself does not guarantee the classical simulability of shallow-depth boson sampling in the local parallel architecture. 
Although there exist efficient classical algorithms that can simulate sparse output probability distribution of some circuit families~\cite{schwarz2013simulating, roga2020classical, bravyi2023classical}, they cannot be directly applied to our cases. 
Hence, our future work would be to demonstrate how the lack of anti-concentration in the local parallel architecture leads to the classical simulability of the shallow-depth boson sampling.

\section{Geometrically non-local architecture: hypercubic structure}\label{nonlocalhypercubicstructure}
So far, we have shown that the $d$-dimensional local parallel circuit is faced with the limit for achieving sampling hardness at shallow depth. 
As the size of the lightcone of the local architecture grows polynomially with depth, a mode connectivity issue arises below a certain polynomial depth, which leads to the prohibition of most of the outcomes and thus undermines the average-case hardness.
Also, as the random instances of the circuit are characterized by diffusive dynamics, most of the photon propagation is determined by the effective lightcone whose size is smaller than the original one, which requires additional depth to resolve the mode connectivity issue.
The problem is that from the photon loss model for each unit depth, polynomial depth entails large noise that disturbs classical intractability.

Since the problems we addressed come from circuit architecture which allows only local interactions, an obvious way to resolve those problems is to consider non-local interactions along modes, i.e., for the case that geometrically non-local unitary gates are available. 
Allowing the non-local interactions, we find a circuit that can mitigate the issues about connectivity and diffusive property within logarithmic circuit depth, where the architecture of the circuit has been used in numerical linear algebra in order to construct structured matrices efficiently \cite{parker1995random, mathieu2014fast, dao2019learning}.
Throughout this section, we refer to the circuit as a non-local hypercubic structure (NLHS) circuit for simplicity, and an example of the circuit architecture for mode number $M = 16$ is illustrated in Fig.~\ref{fig:partition circuit}.

\begin{figure*}[t]
\includegraphics[width=0.8\linewidth]{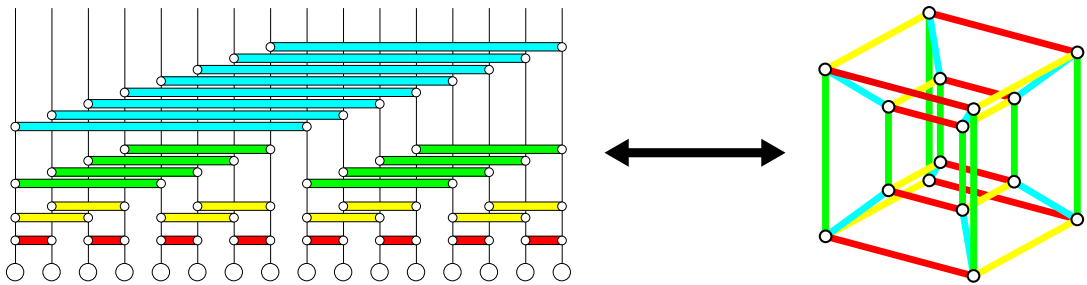}
\caption{A Schematic of a one-cycle of NLHS circuit for total mode number $M=2^4$. 
In this case, the architecture of the circuit can be interpreted as a 4$d$ hypercube, also known as a tesseract. 
Here, each color implies a depth unit, as all of the unitary gates with the same color are applied in parallel.}
\label{fig:partition circuit}
\end{figure*}

\subsection{Structure of the circuit}

Let the total mode number be $M = 2^p$ (for some integer $p$) for convenience. 
The sequence for the construction of a one-cycle of NLHS circuit is simple:
for each $D = 1,2,\cdots,p$, apply unitary gates between mode number $2^D(j-1) + k$ and $2^D(j-1) + k + 2^{D-1}$, for all $j = 1,2,\cdots,2^{p-D}$ and $k = 1,2,\cdots,2^{D-1}$. 
Here, we point out that each $D$ implies the unit depth in terms of our standard, as unitary gates are applied in parallel for all $j$ and $k$ indices (e.g., all of the unitary gates with the same color in Fig.~\ref{fig:partition circuit} are applied in unit depth). 
The corresponding circuit architecture has a well-spreading property such that photons departed from any mode can spread $2^D$ modes for depth $D$, and thus fully connected for the one-cycle of the circuit, viz., at least one path exists from any input mode toward any output mode. 
Additionally, even for arbitrary $M$, we can still achieve full mode connection in $\mathcal{O}(\log M)$ depth by iterating the above process (e.g., by stacking one-cycle of the circuit with depth $\lfloor \log M \rfloor$ on the left side and right side of the modes, alternatively).
Even so, to simplify the analysis, we only consider the case of $M = 2^p$ with a fixed $p\in\mathbb{N}$ for the rest of the section and define a single round of NLHS circuit as the one-cycle of the sequence we introduced, i.e., the procedure up to circuit depth $D = p$.

The NLHS circuit architecture indeed satisfies the minimum requirement for achieving average-case hardness in a low-depth regime, namely, having accessibility to all outcomes in a single round of the circuit.
Now we consider the case that the circuit is randomly drawn from a specific circuit ensemble, to examine whether the probability distribution can satisfy the condition of average-case hardness. 
Specifically, we use a typical setup such that all unitary gates composing the NLHS circuit as independently chosen random beam splitters, each drawn from Haar measure on U(2), similar to the local parallel random circuit case.  
From now on, we define the corresponding circuit as \textit{random} NLHS circuit.

We first find that a single round of the random NLHS circuit has evenly spreading property in terms of single-photon propagation, compared to the localized property (i.e., diffusive dynamics) of the local parallel random circuit. 
We can check this property with an absolute squared unitary element averaging over random circuit instances, which corresponds to the average of single-photon transition amplitude between modes. 
Let $U_{i,j}^{D}$ be a unitary element for the circuit with depth $D$.
Using properties of random beam splitters each drawn from Haar measure on U(2), an average of an absolute squared unitary element becomes
\begin{align}\label{onedesign}
\begin{split}
\E[|U_{i,j}^{p}|^2] &= \frac{1}{2}\E[|U_{i,j}^{p-1}|^2] + \frac{1}{2}\E[|U_{i,j\pm2^{p-1}}^{p-1}|^2] \\
&= \frac{1}{4}\E[|U_{i,j}^{p-2}|^2] + \frac{1}{4}\E[|U_{i,j\pm2^{p-2}}^{p-2}|^2] \\ &+ \frac{1}{4}\E[|U_{i,j\pm2^{p-1}}^{p-2}|^2] + \frac{1}{4}\E[|U_{i,j\pm2^{p-1}\pm2^{p-2}}^{p-2}|^2] \\
&= \cdots \\
&= \frac{1}{2^p}\sum_{j=1}^{M}\E[|U_{i,j}^{0}|^2] \\&= \frac{1}{2^p}
\end{split}
\end{align}
by averaging over the beam splitters for each depth, where $\pm$ signs behind index $j$ are determined by the binary notation of $j$. 
Interestingly, the result shows that a single round of random NLHS circuit can mimic the Haar unitary up to the first moment.

Also, if we repeat a single round of random NLHS circuit $\mathcal{C}$ times (i.e., total $\mathcal{C}\log M$ depth, iteratively stacking a single round of NLHS circuit of depth $D = \log M$), the number of paths between arbitrary modes scales as $M^{\mathcal{C}-1}$. 
To be more detailed, as a path between arbitrary modes is uniquely determined for a single round of NLHS circuit, the number of paths between arbitrary modes grows as $1, M, M^2, \cdots$ by repeating the circuit. 
This exponential growth of possible paths motivates us to elucidate the properties of the random NLHS circuit and how the repetition of the corresponding circuit works on those properties. 

For the following sections, we propose numerical evidence that the random NLHS circuit quickly converges to the global Haar random unitary (i.e., random unitary drawn from Haar measure on U($M$)) by repeating a single round of the random circuit drawn independently for each repetition. 
Specifically, we examine the probability distribution and entanglement generation of the random NLHS circuit with different repetition numbers and their convergence behavior toward the global Haar random unitary circuit.

\begin{figure*}[t]
    \includegraphics[width=0.85\linewidth]{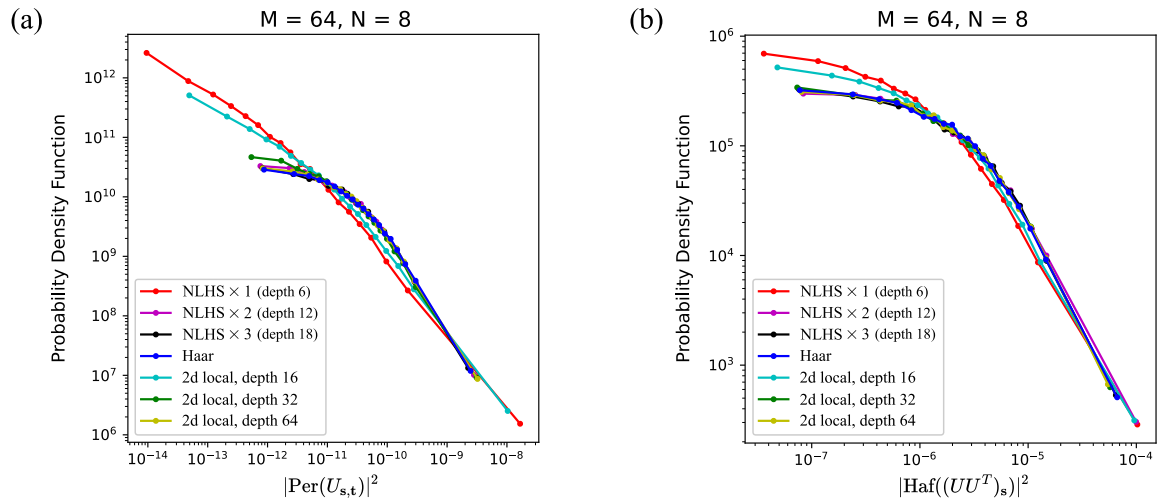}
    \caption{Probability density function for output probabilities of (a) FBS and (b) GBS with $M = 64$ and $N = 8$, for random NLHS circuit with different repetition numbers and 2-dimensional local parallel random circuit with different depths. The distribution corresponding to the global Haar unitary circuit is also displayed as an ideal case. The $x$-axis is for the (unnormalized) output probability values, and the y-axis is for the corresponding densities along the samples. Input $\bm{t}$ and output $\bm{s}$ are chosen randomly along possible configurations of $N$ photons along $M$ modes, without collision.}
    \label{fig:comparison}
\end{figure*}

\subsection{Probability distribution of random NLHS circuit}


We investigate the output probability distribution of FBS and GBS over the random NLHS circuit (with repetitions) instances and its resemblance to the distribution from the global Haar random unitary. 
To statistically analyze the output probabilities over the randomly chosen circuit instances, we employ the \textit{probability density function}, which is a modified version of the histogram and employed by Ref.~\cite{aaronson2011computational} to numerically show evidence of the anti-concentration property of FBS (for more details, see Fig.~5 in Ref.~\cite{aaronson2011computational}).
The reason we use the probability density function is for enhanced visibility, as we observed that the degree of concentration of output probability distribution dramatically changes with circuit depth.

The difference from the histogram is, instead of using equal intervals on the density axis, each interval now contains equal numbers of samples. 
Specifically, samples sorted in ascending order are divided into each bucket, containing an equal number of samples.
Next, the density of each bucket is determined by the fraction of samples it contains divided by its width, where the width of the bucket is the difference between the maximum and minimum value of samples in the bucket. 
The probability density function is a plot of those densities corresponding to each bucket, where the $x$-axis represents the value of each sample. 
Throughout this section, we fix the number of output probability instances for each unitary circuit as 10000 and fix the number of buckets as 20, so a single point of the density function contains 500 output probability instances. 



For output probabilities, we use $|\text{Per}(U_{\bm{s},\bm{t}})|^2$ for the FBS scheme, which corresponds to the output probability of FBS to get output $\bm{s}$ from input $\bm{t}$.
Also, we use $|\text{Haf}((UU^T)_{\bm{s}})|^2$ for the GBS scheme, which corresponds to the unnormalized output probability of GBS to get output $\bm{s}$ from full-mode SMSV input state with equal squeezing. 
To simplify the analysis, we fix the output photon number of GBS.

\subsubsection{Comparison with the local parallel circuit}
Using the probability density function, we first compare the performance of the random NLHS circuit and typical 2-dimensional local parallel random circuit, for fixed mode number $M = 64$ and photon number $N = 8$. 
In this case, a single round of random NLHS circuit can be implemented with circuit depth $D = \log M = 6$, and thus $\mathcal{C}$ repetition of the circuit requires $6\mathcal{C}$ depth. 
Here, we consider the repetition number $\mathcal{C}$ up to three (i.e., up to $D = 18$), with the random circuit drawn \textit{independently} for each repetition. 
Also, for the 2-dimensional local parallel random circuit, we consider circuit depth from $D = 16$, not to restrict photon propagation between arbitrary modes.  

We randomly sample 10000 unitary matrices for each circuit with different depths and calculate the output probability of randomly chosen collision-free input/output (only random collision-free output for GBS) for each unitary matrix we sampled. 
Here, we can focus on collision-free cases if $M$ is much larger than $N$ such that $N = \mathcal{O}(\sqrt{M})$ \cite{aaronson2011computational}, which fits well with our setup. 
Using those probability values, we plot the probability density function in Fig.~\ref{fig:comparison}, for random NLHS circuits with different repetition numbers, and for 2-dimensional local random circuits with different depths.
We also plot the function corresponding to $M$ dimensional Haar random unitary circuit, as an ideal case. 
The dispersion of the distribution along the $x$-axis implies a variance of output probability values over the random instances, such that the less dispersed the distribution implies the more the corresponding output probability distribution is anti-concentrated.

\begin{figure*}[t]
    \includegraphics[width=\linewidth]{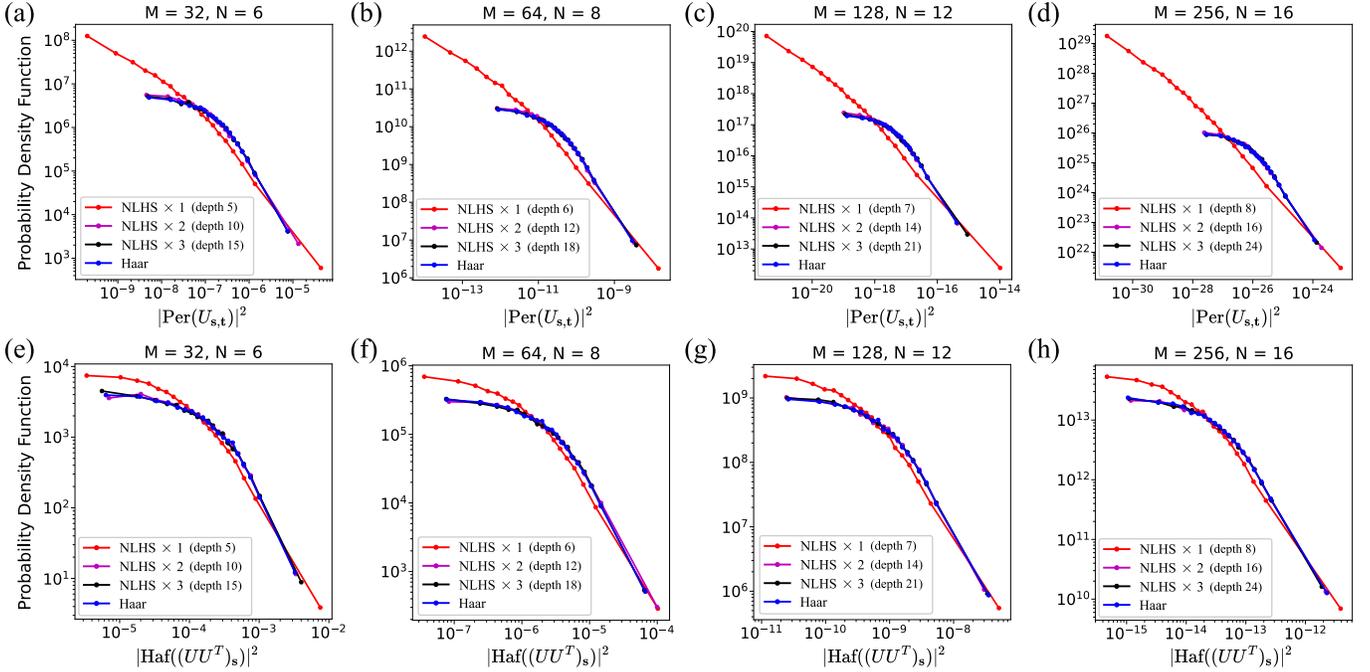}
    \caption{Probability density function for (a-d) FBS and (e-h) GBS, with $M = 32, 64, 128, 256$ and $N = 6, 8, 12, 16$ each. For each figure, the distributions corresponding to the random NLHS circuit, with repetition numbers up to three, are presented. Also, the distribution of the Haar unitary is displayed as an ideal case. Input $\bm{t}$ and output $\bm{s}$ are chosen randomly along possible configurations of $N$ photons along $M$ modes, without collision. 
    }
    \label{fig:densitysum}
\end{figure*}

We find that distributions from both circuits converge to a distribution from $M$ dimensional Haar random unitary as increasing depth, which is predictable from the convergence properties studied at \cite{Emerson2003science, Emerson2005pra}. 
However, their convergence behavior toward global Haar random unitary is different. 
Specifically, although a single round of random NLHS circuit has poor performance in terms of anti-concentration, an iteration of the random NLHS circuit makes quick convergence to the distribution of the Haar random unitary, both for FBS and GBS schemes.

\subsubsection{Convergence to the behavior of the global Haar measure, for different system sizes}
To identify how the convergence behavior varies as system size scales, we further investigate the number of repetitions of the random NLHS circuit required for the convergence to the global Haar random unitary with increasing system size. 
Here, we also examine the repetition number up to three with the random circuit drawn independently for each repetition. 
We set mode numbers $M = 32, 64, 128, 256$ and randomly sample 10000 unitary matrices for each repetition of the random NLHS circuit and for $M$ dimensional Haar random unitary circuit. 
Then we calculate the output probability of randomly chosen input/output for each matrix, both for FBS and GBS schemes, given output photon number $N$ around $\sqrt{M}$.
We plot the probability density function for those values in Fig.~\ref{fig:densitysum}.

Interestingly, the number of repetitions required to imitate the distribution from the global Haar random unitary is insensitive to system size. 
More specifically, stacking the random NLHS circuit twice dramatically changes its distribution, showing a close resemblance to the distribution from the global Haar unitary, which implies the existence of critical behavior between stacking the circuit once and twice.
Here, the mode numbers we used (up to 256 modes) cover all recent experimental results of GBS \cite{zhong2020quantum, zhong2021phase, Xanadu2022nature}, even though the photon numbers we used are small compared to those results.
Therefore, up to the system size covered by near-term experiments, the random NLHS circuit can imitate output probability distributions of fixed photon number $N \sim \sqrt{M}$ from the global Haar random unitary circuit with considerably low depth. 
Also, as the required repetition number for imitation is insensitive to the system size, the result gives hope that the low-depth circuit (a few repetitions of the random NLHS circuit) can well approximate the Haar unitary and thus suggests evidence of the average-case hardness for larger quantum systems.

We also examine the scaling behavior of Haar measure convergence in terms of photon number $N$, for fixed mode number $M$. 
We plot the probability density function for fixed mode number $M = 32$ with output photon numbers $N = 8, 16$ both for FBS and GBS schemes, which can be checked in Appendix \ref{fixedmodenumber}.
The result shows similar behavior, i.e., the number of repetitions required does not change a lot as the output photon number increases. 
Hence, the result suggests a possibility such that a low-depth random NLHS circuit can even imitate output probability distributions from the global Haar random unitary for output photon number much larger than $\mathcal{O}(\sqrt{M})$, which is also remarkable.

\subsubsection{Hiding property}
Additionally, we investigate if hiding property holds for the random NLHS circuit. 
The hiding theorem of BS shows two properties \cite{aaronson2011computational, deshpande2021quantum}.
One is that probability distribution is output independent, i.e., the probability of any outcome instances follows the same distribution over random circuit instances.
The other is that the corresponding output probability is similar to the quantity which is conjectured to be hard to estimate on average, viz., squared permanent (hafnian) of random Gaussian matrices.  
We numerically investigate both of the above properties using the probability density function; readers who are interested in this subject can see the result in Appendix~\ref{hidingevidence}.

\begin{figure*}[t]
    \centering
    \includegraphics[width=\linewidth]{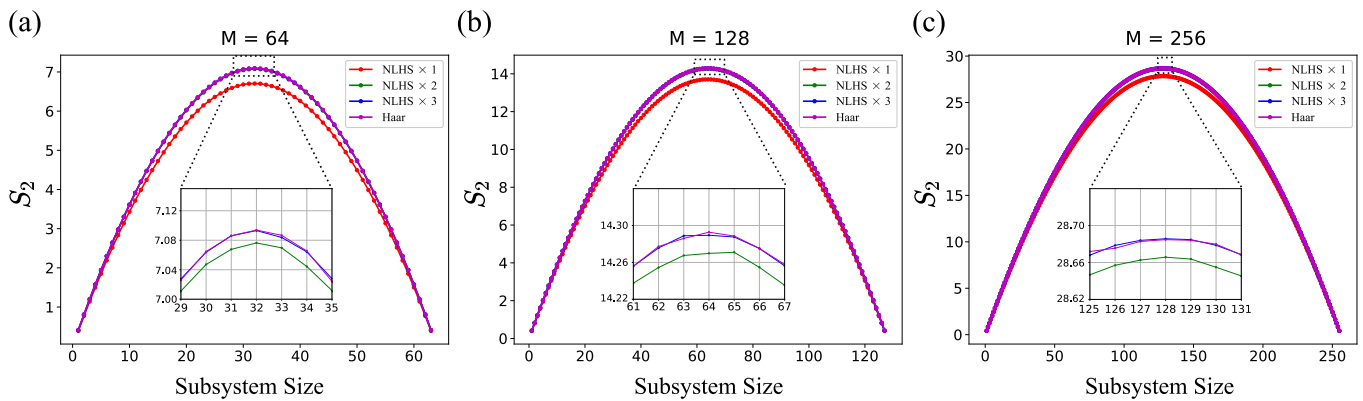}
    \caption{The average of R\'enyi-2 entropy with respect to different subsystem sizes, for mode number (a) $M = 64$, (b) $M = 128$, (c) $M = 256$.  For each figure, the average of entropy values corresponding to the Gaussian state evolved by various repetitions of random NLHS circuits and $M$ mode Haar random unitary circuits, are presented. 
    }
    \label{fig:page}
\end{figure*}

\subsection{Entanglement generation of random NLHS circuit}
In this section, we investigate how entanglement along the modes varies with the repetition of the NLHS circuit, to show the simulation hardness and the convergence behavior toward the global Haar measure from another perspective. 
Entanglement is considered necessary for the hardness of classical simulation of a quantum state, as various tensor network methods can approximate the low entangled state efficiently \cite{Vidal2003prl, Vidal2004prl, Verstraete2004prl}.
R\'enyi entropies are often cited as a measure of entanglement, and possibly indicate the feasibility of simulating the given quantum state via those methods~\cite{Verstraete2006prb, Schuch2008prl}.
Specifically, restricting to bosonic Gaussian states, many results \cite{Adesso2012prl, Lami2016prl, Camilo2019prb} suggested that R\'enyi-2 entropy is a good measure of entanglement. 
Moreover, Ref.~\cite{Iosue2022page} recently proposed the R\'enyi-2 Page curve of pure Gaussian state evolved by the global Haar random unitary, exactly the output state of the ideal GBS scheme. 
Hence, to examine the entanglement behavior, we focus on the R\'enyi-2 entropy of reduced states of the output states of GBS, each evolved by different repetitions of random NLHS circuits or global Haar random unitary.
Our goal is to examine the entanglement generation of the random NLHS circuit with increasing depth and investigate if the circuit with increasing depth can reproduce the R\'enyi-2 Page curve, to propose evidence of the simulation hardness of GBS for the corresponding circuit and its convergence behavior toward the global Haar random unitary. 

For mode numbers $M = 64, 128, 256$, we randomly sample 10000 unitary matrices from each repetition of the random NLHS circuit, and also from the global Haar measure as an ideal case.
Then the $M$ product of SMSV with equal squeezing parameter $r = 0.4$ is evolved with those unitary matrices.
To calculate the entropy of the output states, $M$ modes are partitioned into two groups, one for $k$ modes and the other for $M-k$ modes, where the mode selection is completely random along ${}_{M}C_{k}$ possible combinations. 
For all $k\in [M-1]$, we calculate the entropy of a reduced state and average over the unitary matrices, while the R\'enyi-2 entropy takes the form of $S_2 = \frac{1}{2}\log \det \sigma$, where $\sigma$ is the covariance matrix of the reduced state \cite{serafini2017quantumcontinuousvariable}.

In Fig.~\ref{fig:page}, we plot the average of R\'enyi-2 entropy with respect to different subsystem sizes, for each mode number we fixed. 
The result shows the entanglement generation of the random NLHS circuit with an increasing stacking number, where the distribution converges to the R\'enyi-2 Page curve (i.e., distribution from the global Haar random unitary) as circuit depth increases. 
It is notable that the required number of repetitions for the convergence is insensitive to system size, similar to previous results we addressed. 
Specifically, stacking the random NLHS circuit twice or more resembles the behavior of the global Haar measure, and the absolute difference of the entropy values between them varies very slowly with system size, which is also remarkable.

\subsection{Unitary design of random NLHS circuit}
We also investigate the convergence behavior of the random NLHS circuit toward global Haar random circuits from a different perspective, which is unitary design. 
As we can see in Eq.~\eqref{onedesign}, even a single round of random NLHS circuit can mimic the Haar random unitary up to the first moment. 
Based on this understanding, we aim to explore if the random NLHS circuit can still mimic the Haar random unitary for the higher moments, i.e., if it has a unitary design property. 
Among the various measures of unitary design as in~\cite{low2010pseudo, sim2019expressibility}, we focus on the frame potential~\cite{gross2007evenly} of the random NLHS circuit, which is suitable for numerical evaluation and often cited as evidence of the unitary design property~\cite{roberts2017chaos, cotler2017chaos, liu2018spectral, liu2022estimating}. 
Here, we also calculate the frame potential for the local parallel random circuits with various circuit dimensions to compare their convergence behaviors with the random NLHS circuit.
We find that the random NLHS circuit shows fast convergence behavior with increasing circuit depth for higher moments $k = 2, 3, 4$, and its convergence behavior is faster than those of other local circuits.
This suggests evidence of the unitary design property of the random NLHS circuit within a few repetitions. 
The details are provided in Appendix~\ref{framepotential}.

\subsection{Experimental realization of NLHS circuit}
Throughout the previous sections, we have numerically shown that the random NLHS circuit shows quick convergence behavior toward Haar unitary distribution by a few repetitions, which gives a possibility for the average-case hardness of BS with shallow-depth quantum circuits.
In this section, we discuss the experimental feasibility of the random NLHS circuit. 
Experimental realization of the random NLHS circuit (with repetitions) requires a geometrically non-local setup or a high-dimensional architecture with its dimension up to $d = \log M$. 

Considering photonic systems, Ref.~\cite{crespi2016suppression} implemented integrated photonic chips for $2^d$ modes with dimensions up to $d = 3$, where the interaction along modes occurs in hypercubic sequence, precisely the circuit architecture of the NLHS circuit. 
Hence, implementation of the random NLHS circuit would be feasible if the dimension of those devices can be increased in a scalable manner and if the gates composing the devices are programmable. 
Additionally, we can employ high-dimensional photonic architecture used in \cite{deshpande2021quantum, Xanadu2022nature} but in a different gate sequence, i.e., the hypercubic sequence we used. 
Optical frequency crystals, as demonstrated in \cite{imany2020probing, hu2020realization}, would be a viable alternative, which can also suggest high-dimensional photonic architecture as the NLHS circuit. 

We can also consider phononic systems, such as trapped ions, which can be utilized as alternative bosons to construct linear-optical circuits \cite{lau2012proposal, shen2014scalable, leibfried2003quantum}. 
Specifically, recent works from \cite{Chen2023scalable} have experimentally demonstrated a programmable all-to-all setup for four modes of phononic network with trapped ions, using collective-vibrational modes of ions. 
Although architectures based on trapped ions might face the limitation of long-range interaction due to spectral crowding when system size scales, employing modular architecture would help overcome this problem. 
Therefore, this setup may also offer a promising candidate for the experimental realization of the random NLHS circuit in the near future.

\section{Discussion}\label{discussion}
In this work, we examined that for local circuit architecture, the depth of the linear optical circuit should be large enough to satisfy the conditions for the sampling hardness based on the currently available proof technique. 
More precisely, for depth under the degree of $\mathcal{O}(M^{\frac{(\gamma - 1)}{\gamma d}})$, most of the probabilities are zero regardless of circuit ensemble and input configuration. 
Besides, for local random ensemble, most of the probabilities are too small and easy to estimate for depth under $\mathcal{O}(M^{\frac{1}{\gamma}[\frac{2(\gamma - 1)}{d} - \lambda ]})$.
We interpreted that this problem comes from the properties of the local parallel architecture, where the circuit lightcone grows polynomially with depth, and the circuit is characterized by the diffusive property, which hinders anti-concentration. 
We proposed the circuit architecture using non-local interactions, which can restrain the issues we addressed at logarithm circuit depth. 
Also, the repetition of the random NLHS circuit shows quick convergence behavior toward the global Haar random unitary, where the speed of convergence is insensitive to the system size.
Hence, we conclude that the corresponding circuit can be used for approximate Haar measure with shallow depth circuit, and has the potential to be utilized as an architecture for scalable quantum advantage with BS. 

Here are a few remarks about our results and related open questions:

1. As we have described in Sec.~\ref{averagecasehardnessintroduction}, anti-concentration is a necessary condition for the ``current" average-case hardness proof technique.
This means that the lack of anti-concentration in Sec.~\ref{limitationsoflocal} does not directly imply the classical simulability of boson sampling in shallow-depth local parallel architecture.
Accordingly, it does not completely rule out the classical intractability of boson sampling in local shallow-depth architecture, although a new proof technique for the simulation hardness has to be developed in this case.
To fully investigate whether the classical simulation in local shallow-depth circuits is easy or hard, we can take two different approaches: $(i)$ demonstrate how the lack of anti-concentration leads to the classical simulability of the shallow-depth boson sampling, or $(ii)$ find average-case hardness \textit{inside} the exponentially small portion of outcomes. 
We leave those problems as open questions.

2. It is worth emphasizing the difference of our result from the one in Ref.~\cite{deshpande2021quantum}, which provides evidence of the hardness of sampling from the high dimensional circuit of {\it constant} depth. 
We stress that our depth unit is different from the depth definition in Refs.~\cite{deshpande2021quantum, Xanadu2022nature}, where the difference comes from the structural difference of the circuit architecture.
Specifically, they denote unit depth by one cycle of serial application of gates along the modes for each dimension, which translates to depth $\mathcal{O}(M)$ from our standard such that only parallel implementation of gates is allowed for unit depth.

3. The key difference between RCS and BS with shallow-depth quantum circuits lies in the accessibility of outcomes, which comes from a systematic difference.
In RCS, as each outcome is obtained through measurement on a computational basis, the local Pauli-X operator on each qubit gives access to all possible outcomes. 
Hence, when a circuit for RCS is constructed using local Haar unitaries, all of the outcomes are accessible regardless of the circuit depth, and output symmetry from random circuit instances is easily established by the translation invariance of the Haar measure over the local random unitary gates. 
However, in the case of BS, each outcome is obtained through measurement on a photon number basis; the accessibility of outcomes can be restricted in low-depth regimes as we have shown. 
Since the current proof of the sampling hardness requires the usage of all possible outcomes (as the allowed error is given by total variation distance), we emphasize that the circuit architecture for BS must be carefully designed.
In this regard, the NLHS circuit we proposed may play an important role in the shallow-depth hardness argument as it gives access to all possible outcomes at logarithm depth.

4. As we discussed in Sec. \ref{averagecasehardnessintroduction}, the measure over $M$ mode random NLHS circuits with a few repetitions cannot strictly imitate the Haar measure on U($M$) unless the repetition number satisfies $\Omega(M)$. 
However, even though there is only numerical evidence yet, the repetition of the random NLHS circuit can well approximate certain properties (output probability distribution, entanglement generation) of the global Haar random unitary.
This implies that \textit{approximating} Haar measure might be possible only with a few repetitions of the circuit \cite{Emerson2003science, Emerson2005pra}. 
Our results raise an open question of how much the measure over the random circuit matrices can imitate the global Haar measure, with increasing repetition numbers.

5. A few repetitions of the random NLHS circuit show global randomness for fixed-size numerical experiments, which gives a possibility to show the average-case hardness of output probability approximation at a shallow depth regime, for the asymptotic limit. 
As we discussed in the introduction, there is a possibility that a constant number of repetitions of the random NLHS circuit (i.e., logarithm depth circuit) may enable us to avoid the classically simulable regime of noisy BS with photon loss.
Specifically, as the output photon number is given by $N_{\text{out}}=NT^{D}$ for circuit depth $D$, asymptotically $N_{\text{out}}$ can be larger than $\mathcal{O}(\sqrt{N})$ for $D = \mathcal{O}(\log N)$. 
Therefore, another open problem would be to show the average-case hardness of approximating output probabilities from a few repetitions of the random NLHS circuit, for the case that noises are applied at each depth.

\begin{acknowledgements}
B.G. and H.J. were supported by the National Research Foundation of Korea (NRF) grants funded by the Korean government (Grant~Nos.~NRF-2020R1A2C1008609, 2023R1A2C1006115 and NRF-2022M3E4A1076099) via the Institute of Applied Physics at Seoul National University, and by the Institute of Information \& Communications Technology Planning $\&$ Evaluation (IITP) grant funded by the Korea government (MSIT) (IITP-2021-0-01059 and IITP-2023-2020-0-01606).
C.O. and L.J. acknowledge support from the ARO (W911NF-23-1-0077), ARO MURI (W911NF-21-1-0325), AFOSR MURI (FA9550-19-1-0399, FA9550-21-1-0209), AFRL (FA8649-21-P-0781), NSF (OMA-1936118, ERC-1941583, OMA-2137642), NTT Research, and the Packard Foundation (2020-71479).

\end{acknowledgements}

\begin{appendix}
\section{Proof of Theorem \ref{theorem2}}\label{theorem2proof}

\begin{figure*}[t!]
    \includegraphics[width=\linewidth]{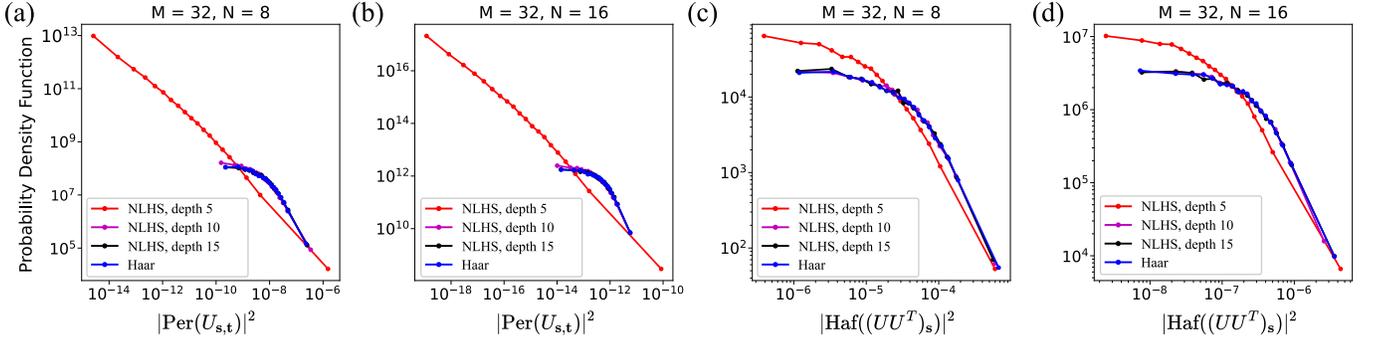}
    \caption{Probability density function for (a-b) FBS and (c-d) GBS, with $M = 32$ and $N = 8, 16$ each, Input $\bm{t}$ and output $\bm{s}$ are fixed at first $N$ modes over $M$ modes. The distributions corresponding to the random NLHS circuit, with repetition numbers up to three, are presented. Also, the distribution of the Haar unitary is displayed as an ideal case. 
    }
    \label{fig:densitysumfixedmode}
\end{figure*}

In this Appendix, we provide the proof of Theorem~\ref{theorem2}.
\begin{proof}
We first follow the process of approximating the output probability distribution of the local parallel random circuit using unitary matrix truncation; more details can be found in Refs.~\cite{Zhang2021npjq, Oh2022PRL}. 
It is known that photons follow classical random walk behavior on average in the local parallel random circuit. 
From this behavior, they are effectively localized in a regime even smaller than a given lightcone; we can find a bound of the regime such that the probability of leakage from this regime is exponentially small. 
In this case, truncation of the circuit unitary matrix by discarding matrix elements outside the regime results in an exponentially small total variation distance from the original output distribution.
This implies that the summation of output probabilities that at least one photon propagates outside the regime would also be exponentially small.

To be more specific, we define a leakage rate from the input mode $i$ as $\eta_{i}(l) \coloneqq \sum_j^{'}|U_{j,i}|^2$, where the summation is over the all possible modes that are geometrically away from the mode $i$ more than length $l$ for each dimension. 
From the classical random walk behavior, when the length $l$ satisfies the corresponding bound
\begin{equation}\label{leakagelength}
l \ge \sqrt{\frac{2N^{\lambda}D}{\beta d}}, 
\end{equation}
with any constant $\lambda > 0$ and $0 < \beta < 1$, then $\Pr\left[ \eta_i(l) \le \exp(-N^{\lambda})\right] \ge 1-\delta$ where $\delta \le 2d\exp\left[(1-\frac{1}{\beta})N^{\lambda}\right]$, and the probability is over the random instances of the circuit.

We define a matrix $\Tilde{U}$, a truncated version of a unitary matrix $U$, by discarding the matrix elements that are farther than $l$ for each dimension from the given column index. 
From this definition, $\Tilde{U} \equiv U - dU$ with $\|dU\|_{F}^{2} = \sum_{i}\eta_{i}(l)$, where the summation is over all possible column indices. If $l$ satisfies Eq.~(\ref{leakagelength}), then $\|dU\|_{F}^{2} \le \text{poly}(N)\exp(-N^{\lambda})$ with probabilty $1-\delta$ over the circuit instances. 
Using this fact with results from \cite{Arkhipov2015PRA}, we can deduce that the total variation distance between distributions from $U$ and $\Tilde{U}$ is bounded by $\poly(N)\|dU\|_{F}^{2}$ which is exponentially small with system size for our case. 
Hence, the summation of probabilities of outcomes corresponding to the discarded elements, i.e. the outcomes that at least a single photon propagates from the source more than $l$ satisfying Eq.~(\ref{leakagelength}), is bounded by the total variation distance which is exponentially small with high probability.

However, as $\Tilde{U}$ is not a unitary matrix, the output distribution of the matrix cannot be determined. 
We can resolve this issue by first extending $\Tilde{U}$ to a unitary matrix in $2M \times 2M$, which contains normalized $\Tilde{U}$ at first $M$ modes, and post-selecting the outcomes only at the first $M$ modes. 
This way, we can get the probability distribution corresponding to $\Tilde{U}$ (see \cite{Oh2022PRL} for more details).
To summarize, the summation of probabilities of the outcomes outside the regime $l$ has the order of maximally $\text{poly}(N)\exp(-N^{\lambda})$ with probability $1-\delta$ over the circuit instances.

We define an \textit{effective} lightcone $L_{D}^{'}$ such that the size of the lightcone for each dimension has a minimum value satisfying Eq.~(\ref{leakagelength}). 
Following the above discussions, the summation of probabilities of the outcomes outside the effective lightcone is exponentially small with high probability. 
Hence, if the portion of outcomes inside the effective lightcone is exponentially small, the output probability distribution is concentrated on the small portion of the outcomes. 

Let $\mathcal{M}'$ number of outcomes $\bm{s}$ satisfying the constraint (\ref{constraint1}), but now the lightcone term is replaced by the effective lightcone $L_{D}^{'}(t_i)$. Upper bound of $|L_{D}^{'}(t_i)|$ is
\begin{equation}
 \left|L_D^{'}(t_i)\right| \le \left(\frac{2N^{\lambda}D}{\beta d}\right)^{\frac{d}{2}},
\end{equation}
where equality holds when the effective lightcone does not meet geometrical boundaries. The ratio of outcomes inside the effective lightcone over all possible outcomes $\Delta'$ is
\begin{align}
    \Delta' &= \frac{\mathcal{M}'}{\binom{M+N-1}{N}} \\
&\le \sqrt{2\pi N}e^{\frac{1}{12N}}\left[\frac{N}{eM}\left(\frac{2N^{\lambda}D}{\beta d}\right)^{\frac{d}{2}}\right]^N \\
&< 3\sqrt{N}\left[ \frac{1}{ec_0}\left( \frac{2}{\beta d}\right)^{\frac{d}{2}}D^{\frac{d}{2}}N^{1-\gamma+\frac{\lambda d}{2}}\right]^N.
\end{align}
Indeed, for the depth $D \le \alpha_0N^{\frac{2(\gamma -1)}{d}-\lambda}$ with a constant $\alpha_0 = e^{\frac{2}{d}}c_{0}^{\frac{2}{d}}\beta d/{2}$, most of the probability distribution is concentrated inside the exponentially small portions of outcomes.

\begin{figure*}[bt!]
    \includegraphics[width=0.95\linewidth]{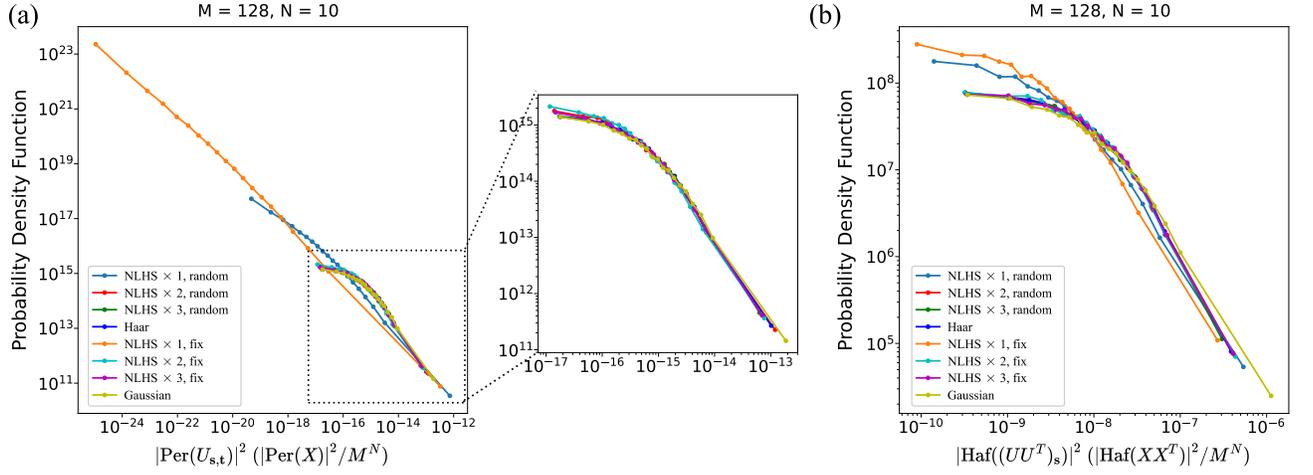}
    \caption{Probability density function for (a) FBS and (b) GBS with $M = 128$ and $N = 10$ each, for fixed (first $N$ modes over $M$ modes) and random input $\bm{t}$ and output $\bm{s}$ without collision. The distributions corresponding to the random NLHS circuit, with repetition numbers up to three, are presented. Also, the distributions of the Haar random matrices and random Gaussian matrices are displayed, to check if there exists the hiding property. 
    }
    \label{fig:hiding}
\end{figure*}

Our proof of Theorem \ref{theorem2} is not completed yet because the statement that the summation of truncated probabilities is smaller than $\text{poly}(N)\exp(-N^{\lambda})$ does not guarantee that all single truncated probabilities are small enough to estimate easily. 
There may exist some truncated probabilities larger than the allowed additive error $\epsilon$, which we cannot safely estimate as zero.

Nevertheless, we can verify that the truncated outcomes outside the effective lightcones having probabilities larger than given $\epsilon$ cannot occupy large portions of outcomes. 
Let $\Gamma$ be the ratio of truncated outcomes having probabilities larger than $\epsilon$ over all possible outcomes of FBS.
Also, let $\mathcal{S}'$ set of outcomes outside the effective lightcone, so $|\mathcal{S}'| = (1-\Delta')\binom{M+N-1}{N}$. 

We find the upper bound of $\Gamma$ as
\begin{align}
    \Gamma &\le \Pr_{\bm{s} \in \mathcal{S}'}\left[p_{\bm{s}} \ge \epsilon \right] \\
&\le \frac{1}{\epsilon}\E_{\bm{s} \in \mathcal{S}'}\left[p_{\bm{s}}\right] \label{eq15} \\
&= \frac{1}{\epsilon}\frac{\sum_{\bm{s} \in \mathcal{S}'}p_{\bm{s}}}{(1-\Delta')\binom{M+N-1}{N}} \\
&\le \frac{1}{\epsilon}\frac{N!}{M^N}\poly(N)\exp(-N^{\lambda}), \label{eq17}
\end{align}
where we used Markov's inequality for Eq.~\eqref{eq15} and the inequality Eq.~\eqref{eq17} holds with probability $1-\delta$ over the circuit instance. Therefore, for $\epsilon = \poly(N)^{-1}\frac{N!}{M^N}$, $\Gamma \le \poly(N)\exp(-N^{\lambda})$ with probability $1-\delta$ over the circuit instances.

In summary,
for depth $D \le \alpha_0N^{\frac{2(\gamma -1)}{d}-\lambda}$ with a constant $\alpha_0 = e^{\frac{2}{d}}c_{0}^{\frac{2}{d}}\beta d/{2}$, the ratio of outcomes we cannot easily estimate their probabilities within additive error $\epsilon$, which can be characterized by $\xi \equiv \Delta' + \Gamma$, is exponentially small with probability larger than $1-\delta$ over the circuit instances. This completes the proof. 
\end{proof}

\begin{figure*}[bt!]
    \includegraphics[width=0.95\linewidth]{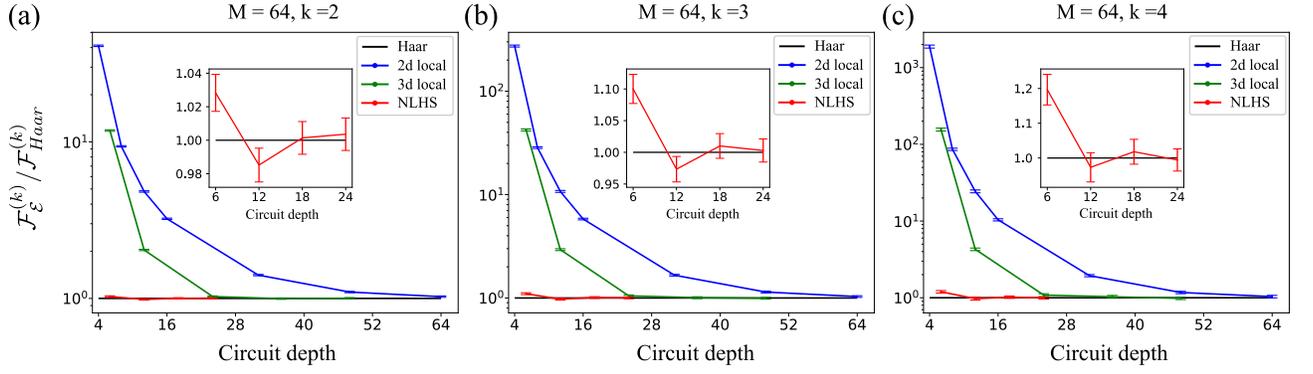}
    \caption{Normalized frame potential $\mathcal{F}_{\mathcal{E}}^{(k)}/\mathcal{F}_{\text{Haar}}^{(k)}$ for the moment (a) $k = 2$, (b) $k = 3$ and (c) $k = 4$, with $M = 64$.
    The distributions $\mathcal{E}$ corresponding to the random NLHS circuit, and the local parallel random circuits are presented. 
    The error bars at the plot correspond to the standard deviations induced by the bootstrapping method~\cite{efron1994introduction}. 
    Also, the theoretical value for the Haar random unitary is presented as a black horizontal line. 
    }
    \label{fig:framepotential}
\end{figure*}

\section{Proof of Theorem \ref{theorem4}}\label{theorem4proof}
In this Appendix, we provide the proof of Theorem \ref{theorem4}.
\begin{proof}
We use the concept of effective lightcone as well for GBS, except for the notation change from $N$ to $n$. Similar to the FBS case, using truncation of unitary matrix by the effective lightcone, summation of probabilities of the outcomes outside the effective lightcone has the order of maximally $\text{poly}(n)\exp(-n^{\lambda})$ with probability $1-\delta$ over the circuit instances. Here, $\delta \le 2d\exp\left[(1-\frac{1}{\beta})n^{\lambda}\right]$ with any $\lambda > 0$ and $0<\beta<1$.

Here, the truncated outcomes can have any output photon numbers, contrary to our scheme that focuses on fixed output photon number $2n$. 
Hence, the summation of truncated probabilities can be enlarged from $\text{poly}(n)\exp(-n^{\lambda})$ if we normalize the output probability distribution in the $2n$ photon subspace.
However, it is not problematic if we also set the mean photon number $2n$, as the output probability of getting the mean photon number is at least an inverse polynomial of system size.
Specifically, the probability of generating $2n$ output photon events is given by negative binomial distribution \cite{hamilton2017gaussian} 
\begin{align}
P(2n) = \binom{K/2 + n - 1}{n}\frac{\tanh^{2n}r}{\cosh^{K}r}. 
\end{align}
By applying $2n = K\sinh^2{r}$, the above probability reduces to $P(2n) = \Omega(n^{-\frac{1}{2}})$, which is indeed an inverse polynomial.
Hence, even considering only the $2n$ photon outcomes, the summation of probabilities truncated by effective lightcone still have the order of maximally $\text{poly}(n)\exp(-n^{\lambda})$.

From the above discussions, if the ratio of outcomes inside the effective lightcone is exponentially small, this means that the output probability distribution is concentrated on the exponentially small portion of the outcomes.
Let $\mathcal{M}'$ number of outcomes satisfying (\ref{constraint2}), but now the lightcone is replaced by the effective lightcone $L_{D}^{'}$. We can similarly use Eq.~\eqref{numberofpossibleoutcome2} but now Eq.~\eqref{upperboundoflightcone2} is modified to
\begin{equation}
 \left|L_D^{'} \circ {L_D^{'}}^{\dag}(r_i)\right| \le 2^d\left(\frac{2n^{\lambda}D}{\beta d}\right)^{\frac{d}{2}}.
\end{equation}
The ratio of outcomes inside the effective lightcone over all possible outcomes $\Delta'$ is
\begin{align}
    \Delta' &= \frac{\mathcal{M}'}{\binom{M+2n-1}{2n}} \\
&\le \sqrt{2}e^{\frac{1}{24n}}\left[ 2^d\left(\frac{2n^{\lambda}D}{\beta d}\right)^{\frac{d}{2}} \frac{2n}{eM}\right]^n \\
&< 2\left[ \frac{2}{ec_1}\left( \frac{8}{\beta d}\right)^{\frac{d}{2}} D^{\frac{d}{2}}n^{1-\gamma + \frac{\lambda d}{2}} \right]^n .
\end{align}
Therefore, for depth $D \le \alpha_1n^{\frac{2(\gamma -1)}{d}-\lambda}$ with a constant $\alpha_1 = c_{1}^{\frac{2}{d}}e^{\frac{2}{d}}\beta d/{2}^{\frac{2}{d}+3}$, $\Delta'$ is exponentially small, which means that probability distribution is concentrated inside the exponentially small portions of outcomes. 

Also, using a similar analysis to the proof of the Theorem \ref{theorem2}, a ratio of truncated outcomes outside the effective lightcones that can have probabilities larger than $\epsilon = \poly(n)^{-1}\frac{(2n)!}{M^{2n}}$ over all possible outcomes, which we previously denoted as $\Gamma$, is upper-bounded by $\Gamma \le \text{poly}(n)\exp(-n^{\lambda})$ with probability $1-\delta$ over the circuit instances. 
This claims that for GBS with depth $D \le \alpha_1n^{\frac{2(\gamma -1)}{d}-\lambda}$, we can well approximate $1 - \Delta'-\Gamma$ of the output probability instances with $1 - \delta$ over the circuit instances, which completes the proof. 
\end{proof}

\section{Probability distributions of random NLHS circuit, for fixed mode number}\label{fixedmodenumber}

To examine the scaling behavior of Haar measure convergence only in terms of photon number, we set mode number $M = 32$ and photon number $N = 8, 16$, and sampled 10000 unitary matrices for each repetition of the random NLHS circuit and global Haar unitary circuit.
Now we fix the input and output mode, as the photon number is comparably large such that the condition to get collision-free outcomes dominantly (i.e., $N = \mathcal{O}(\sqrt{M})$) may no longer hold.
We calculate probabilities for fixed (first $N$ modes over $M$ modes) input and output and plot the probability density function for those values in Fig.~\ref{fig:densitysumfixedmode}.
We find that the number of repetitions required to imitate the distribution of Haar unitary is insensitive to increasing photon number, even for the case that output photon number is comparably larger than $\mathcal{O}(\sqrt{M})$.

\section{Numerical evidence for hiding}\label{hidingevidence}

The hiding theorem is necessary to prove the hardness of the classical simulation of BS; to simplify, the theorem states that hiding random output instances into random circuit instances is possible.
More specifically, for the FBS case, the squared permanent of the $N\times N$ submatrices randomly chosen from $M\times M$ Haar random unitary matrices is very similar to the squared permanent of the random Gaussian matrices $X \sim \mathcal{N}(0,1)_{\mathbb{C}}^{N\times N}$ with normalization factor $\frac{1}{M^N}$, if $M = \omega(N^5)$ is satisfied (conjectured that $M = \omega(N^{2})$ is enough) \cite{aaronson2011computational}.
Also, Ref.~\cite{deshpande2021quantum} suggested hiding property for GBS; for equal squeezing input, the output probability distribution over Haar random circuit instances is similar to the squared hafnian of the product of random Gaussian matrices.

To gather numerical evidence about whether the hiding property holds for the random NLHS circuit,  we compare the results between fixed and random input/output configurations over the random NLHS circuit instances and examine their converging behavior to the distribution of Haar random matrices. 
We also compare those results with the distribution from random Gaussian matrices, that is, $|\text{Per}(X)|^2/M^N$ with $X \sim \mathcal{N}(0,1)_{\mathbb{C}}^{N\times N}$ for FBS \cite{aaronson2011computational}, and $|\text{Haf}(XX^T)|^2/M^N$ with $X \sim \mathcal{N}(0,1)_{\mathbb{C}}^{N\times M}$ for GBS \cite{deshpande2021quantum}; they are believed to be hard to additively estimate on average. 
If the results for fixed and random input/output are close enough to each other, and also close enough to the distributions of Haar random matrices and random Gaussian matrices, it gives evidence that hiding random output probabilities into average-case hard-to-estimate quantity (conjectured) would be possible.

We set mode number $M = 128$ and photon number $N = 10$, and sampled 10000 unitary matrices for each repetition of the random NLHS circuit.
We calculate probabilities both for fixed and random input/output for those matrices, where we fixed input and output modes to the first $N$ modes over $M$ modes.
Also, we calculate probabilities from 10000 Haar unitary matrices and random Gaussian matrices of size $N \times N$ ($N \times M$ for GBS).
We plot the probability density function for those values, which can be checked in Fig.~\ref{fig:hiding}. 
We find that the result demonstrates the manifestation of the hiding property comparably after stacking the random NLHS circuit twice, which is consistent with previous results. 
Both fixed and random output probabilities of random NLHS circuit instances converge to the distributions of Haar random unitary and random Gaussian matrices, suggesting numerical evidence of hiding property at the low-depth regime.

\section{Comparison of frame potential}\label{framepotential}
For given circuit ensemble $\mathcal{E}$, the frame potential $\mathcal{F}_{\mathcal{E}}^{(k)}$ for $k$ moment is given by~\cite{gross2007evenly}
\begin{equation}\label{theoreticalframepotential}
    \mathcal{F}_{\mathcal{E}}^{(k)} = \int_{U,V\in\mathcal{E}}dUdV|\text{Tr}(U^{\dag}V)|^{2k},
\end{equation}
where the theoretical value of frame potential for the Haar random unitary is $\mathcal{F}_{\text{Haar}}^{(k)} = k!$.
By using the Monte-Carlo method, the theoretical value in Eq.~\eqref{theoreticalframepotential} can be approximated numerically by~\cite{liu2022estimating}
\begin{equation}\label{numericalframepotential}
    \mathcal{F}_{\mathcal{E}}^{(k)} \approx \frac{1}{N_{sam}}\sum_{i=1}^{N_{sam}}|\text{Tr}(U_i^{\dag}V_i)|^{2k},
\end{equation}
where the summation is over $N_{sam}$ randomly generated circuits $U_i,V_i$ from the ensemble $\mathcal{E}$.

We compare the frame potential of the random NLHS circuit with the 2-dimensional local parallel random circuit and the 3-dimensional local parallel random circuit for a fixed mode number $M = 64$.
In this case, a single round of the NLHS circuit architecture can be implemented with circuit depth $D = 6$, and a single round of local parallel circuit for $d=2$ ($d=3$) can be implemented with circuit depth $D =4$ ($D = 6$). 
We randomly sample $N_{sam} = 50000$ unitary matrices $U$ and $V$ in Eq.~\eqref{numericalframepotential} for circuit depth starting from $D =4$ (for $d = 2$ local circuit) and $D=6$ (for NLHS circuit and $d=3$ local circuit). 
Using those unitary matrices, we calculate normalized frame potential $\mathcal{F}_{\mathcal{E}}^{(k)}/\mathcal{F}_{\text{Haar}}^{(k)}$ via Monte-Carlo approach as in Eq.~\eqref{numericalframepotential}, for the moment $k = 2, 3, 4$.
In Fig.~\ref{fig:framepotential}, we plot all evaluated values.
We find that all the circuits show the convergence to the moments of Haar random unitary as circuit depth increases.
The random NLHS circuit clearly shows faster convergence behavior with increasing circuit depth compared to other local circuits, which is consistent with the previous results. 
This result suggests numerical evidence of the unitary design property of the random NLHS circuit within a few repetitions.

\end{appendix}

\bibliographystyle{unsrt}
\bibliography{reference}

\begin{thebibliography}{10}

\bibitem{Preskill2018NISQ}
John Preskill.
\newblock Quantum computing in the nisq era and beyond.
\newblock {\em Quantum}, 2:79, 2018.

\bibitem{aaronson2011computational}
Scott Aaronson and Alex Arkhipov.
\newblock The computational complexity of linear optics.
\newblock In {\em Proceedings of the forty-third annual ACM symposium on Theory
  of computing}, pages 333--342, 2011.

\bibitem{hamilton2017gaussian}
Craig~S Hamilton, Regina Kruse, Linda Sansoni, Sonja Barkhofen, Christine
  Silberhorn, and Igor Jex.
\newblock Gaussian boson sampling.
\newblock {\em Physical review letters}, 119(17):170501, 2017.

\bibitem{deshpande2021quantum}
Abhinav Deshpande, Arthur Mehta, Trevor Vincent, Nicol{\'a}s Quesada, Marcel
  Hinsche, Marios Ioannou, Lars Madsen, Jonathan Lavoie, Haoyu Qi, Jens Eisert,
  et~al.
\newblock Quantum computational advantage via high-dimensional gaussian boson
  sampling.
\newblock {\em Science advances}, 8(1):eabi7894, 2022.

\bibitem{Bouland2019nature}
Adam Bouland, Bill Fefferman, Chinmay Nirkhe, and Umesh Vazirani.
\newblock On the complexity and verification of quantum random circuit
  sampling.
\newblock {\em Nature Physics}, 15(2):159--163, 2019.

\bibitem{arute2019quantum}
Frank Arute, Kunal Arya, Ryan Babbush, Dave Bacon, Joseph~C Bardin, Rami
  Barends, Rupak Biswas, Sergio Boixo, Fernando~GSL Brandao, David~A Buell,
  et~al.
\newblock Quantum supremacy using a programmable superconducting processor.
\newblock {\em Nature}, 574(7779):505--510, 2019.

\bibitem{zhong2020quantum}
Han-Sen Zhong, Hui Wang, Yu-Hao Deng, Ming-Cheng Chen, Li-Chao Peng, Yi-Han
  Luo, Jian Qin, Dian Wu, Xing Ding, Yi~Hu, et~al.
\newblock Quantum computational advantage using photons.
\newblock {\em Science}, 370(6523):1460--1463, 2020.

\bibitem{zhong2021phase}
Han-Sen Zhong, Yu-Hao Deng, Jian Qin, Hui Wang, Ming-Cheng Chen, Li-Chao Peng,
  Yi-Han Luo, Dian Wu, Si-Qiu Gong, Hao Su, et~al.
\newblock Phase-programmable {G}aussian boson sampling using stimulated
  squeezed light.
\newblock {\em Physical review letters}, 127(18):180502, 2021.

\bibitem{Xanadu2022nature}
Lars~S Madsen, Fabian Laudenbach, Mohsen~Falamarzi Askarani, Fabien Rortais,
  Trevor Vincent, Jacob~FF Bulmer, Filippo~M Miatto, Leonhard Neuhaus, Lukas~G
  Helt, Matthew~J Collins, et~al.
\newblock Quantum computational advantage with a programmable photonic
  processor.
\newblock {\em Nature}, 606(7912):75--81, 2022.

\bibitem{deng2023gaussian}
Yu-Hao Deng, Yi-Chao Gu, Hua-Liang Liu, Si-Qiu Gong, Hao Su, Zhi-Jiong Zhang,
  Hao-Yang Tang, Meng-Hao Jia, Jia-Min Xu, Ming-Cheng Chen, et~al.
\newblock Gaussian boson sampling with pseudo-photon-number resolving detectors
  and quantum computational advantage.
\newblock {\em arXiv preprint arXiv:2304.12240}, 2023.

\bibitem{Boixo2018nature}
Sergio Boixo, Sergei~V Isakov, Vadim~N Smelyanskiy, Ryan Babbush, Nan Ding,
  Zhang Jiang, Michael~J Bremner, John~M Martinis, and Hartmut Neven.
\newblock Characterizing quantum supremacy in near-term devices.
\newblock {\em Nature Physics}, 14(6):595--600, 2018.

\bibitem{wu2021strong}
Yulin Wu, Wan-Su Bao, Sirui Cao, Fusheng Chen, Ming-Cheng Chen, Xiawei Chen,
  Tung-Hsun Chung, Hui Deng, Yajie Du, Daojin Fan, et~al.
\newblock Strong quantum computational advantage using a superconducting
  quantum processor.
\newblock {\em Physical review letters}, 127(18):180501, 2021.

\bibitem{morvan2023phase}
A~Morvan, B~Villalonga, X~Mi, S~Mandr{\`a}, A~Bengtsson, PV~Klimov, Z~Chen,
  S~Hong, C~Erickson, IK~Drozdov, et~al.
\newblock Phase transition in random circuit sampling.
\newblock {\em arXiv preprint arXiv:2304.11119}, 2023.

\bibitem{Vidal2003prl}
Guifr{\'e} Vidal.
\newblock Efficient classical simulation of slightly entangled quantum
  computations.
\newblock {\em Physical review letters}, 91(14):147902, 2003.

\bibitem{Vidal2004prl}
Guifr{\'e} Vidal.
\newblock Efficient simulation of one-dimensional quantum many-body systems.
\newblock {\em Physical review letters}, 93(4):040502, 2004.

\bibitem{Verstraete2004prl}
Frank Verstraete, Juan~J Garcia-Ripoll, and Juan~Ignacio Cirac.
\newblock Matrix product density operators: Simulation of finite-temperature
  and dissipative systems.
\newblock {\em Physical review letters}, 93(20):207204, 2004.

\bibitem{Napp2022prx}
John~C Napp, Rolando~L La~Placa, Alexander~M Dalzell, Fernando~GSL Brandao, and
  Aram~W Harrow.
\newblock Efficient classical simulation of random shallow 2d quantum circuits.
\newblock {\em Physical Review X}, 12(2):021021, 2022.

\bibitem{Deshpande2018prl}
Abhinav Deshpande, Bill Fefferman, Minh~C Tran, Michael Foss-Feig, and Alexey~V
  Gorshkov.
\newblock Dynamical phase transitions in sampling complexity.
\newblock {\em Physical review letters}, 121(3):030501, 2018.

\bibitem{Qi2022PRA}
Haoyu Qi, Diego Cifuentes, Kamil Br{\'a}dler, Robert Israel, Timjan
  Kalajdzievski, and Nicol{\'a}s Quesada.
\newblock Efficient sampling from shallow gaussian quantum-optical circuits
  with local interactions.
\newblock {\em Physical Review A}, 105(5):052412, 2022.

\bibitem{Oh2022PRL}
Changhun Oh, Youngrong Lim, Bill Fefferman, and Liang Jiang.
\newblock Classical simulation of boson sampling based on graph structure.
\newblock {\em Physical Review Letters}, 128(19):190501, 2022.

\bibitem{Aharonov1996arxiv}
Dorit Aharonov, Michael Ben-Or, Russell Impagliazzo, and Noam Nisan.
\newblock Limitations of noisy reversible computation.
\newblock {\em arXiv preprint quant-ph/9611028}, 1996.

\bibitem{Boixo2017arxiv}
Sergio Boixo, Vadim~N Smelyanskiy, and Hartmut Neven.
\newblock Fourier analysis of sampling from noisy chaotic quantum circuits.
\newblock {\em arXiv preprint arXiv:1708.01875}, 2017.

\bibitem{Gao2018efficient}
Xun Gao and Luming Duan.
\newblock Efficient classical simulation of noisy quantum computation.
\newblock {\em arXiv preprint arXiv:1810.03176}, 2018.

\bibitem{Wang2021PRXQ}
Samson Wang, Enrico Fontana, Marco Cerezo, Kunal Sharma, Akira Sone, Lukasz
  Cincio, and Patrick~J Coles.
\newblock Noise-induced barren plateaus in variational quantum algorithms.
\newblock {\em Nature communications}, 12(1):6961, 2021.

\bibitem{Deshpande2022prx}
Abhinav Deshpande, Pradeep Niroula, Oles Shtanko, Alexey~V Gorshkov, Bill
  Fefferman, and Michael~J Gullans.
\newblock Tight bounds on the convergence of noisy random circuits to the
  uniform distribution.
\newblock {\em PRX Quantum}, 3(4):040329, 2022.

\bibitem{kalai2014gaussian}
Gil Kalai and Guy Kindler.
\newblock Gaussian noise sensitivity and bosonsampling.
\newblock {\em arXiv preprint arXiv:1409.3093}, 2014.

\bibitem{Renema2018arxiv}
Jelmer Renema, Valery Shchesnovich, and Raul Garcia-Patron.
\newblock Classical simulability of noisy boson sampling.
\newblock {\em arXiv preprint arXiv:1809.01953}, 2018.

\bibitem{Renema2018prl}
Jelmer~J Renema, Adrian Menssen, William~R Clements, Gil Triginer, William~S
  Kolthammer, and Ian~A Walmsley.
\newblock Efficient classical algorithm for boson sampling with partially
  distinguishable photons.
\newblock {\em Physical review letters}, 120(22):220502, 2018.

\bibitem{Shchesnovich2019pra}
Valery~S Shchesnovich.
\newblock Noise in boson sampling and the threshold of efficient classical
  simulatability.
\newblock {\em Physical Review A}, 100(1):012340, 2019.

\bibitem{Moylett2020QST}
Alexandra~E Moylett, Ra{\'u}l Garc{\'\i}a-Patr{\'o}n, Jelmer~J Renema, and
  Peter~S Turner.
\newblock Classically simulating near-term partially-distinguishable and lossy
  boson sampling.
\newblock {\em Quantum Science and Technology}, 5(1):015001, 2019.

\bibitem{Brod2020quantum}
Daniel~Jost Brod and Micha{\l} Oszmaniec.
\newblock Classical simulation of linear optics subject to nonuniform losses.
\newblock {\em Quantum}, 4:267, 2020.

\bibitem{Oh2023noisy}
Changhun Oh, Liang Jiang, and Bill Fefferman.
\newblock On classical simulation algorithms for noisy boson sampling.
\newblock {\em arXiv preprint arXiv:2301.11532}, 2023.

\bibitem{Oh2023Tensor}
Changhun Oh, Minzhao Liu, Yuri Alexxev, Bill Fefferman, and Liang Jiang.
\newblock Tensor network algorithm for simulating experimental gaussian boson
  sampling.
\newblock {\em arXiv preprint arXiv:2306.03709}, 2023.

\bibitem{Oszmaniec2018njp}
Micha{\l} Oszmaniec and Daniel~J Brod.
\newblock Classical simulation of photonic linear optics with lost particles.
\newblock {\em New Journal of Physics}, 20(9):092002, 2018.

\bibitem{Patron2019quantum}
Ra{\'u}l Garc{\'\i}a-Patr{\'o}n, Jelmer~J Renema, and Valery Shchesnovich.
\newblock Simulating boson sampling in lossy architectures.
\newblock {\em Quantum}, 3:169, 2019.

\bibitem{Qi2020prl}
Haoyu Qi, Daniel~J Brod, Nicol{\'a}s Quesada, and Ra{\'u}l
  Garc{\'\i}a-Patr{\'o}n.
\newblock Regimes of classical simulability for noisy gaussian boson sampling.
\newblock {\em Physical review letters}, 124(10):100502, 2020.

\bibitem{movassagh2020quantumsupremacy}
Ramis Movassagh.
\newblock Quantum supremacy and random circuits.
\newblock {\em arXiv preprint arXiv:1909.06210}, 2019.

\bibitem{kondo2021quantumsupremacy}
Yasuhiro Kondo, Ryuhei Mori, and Ramis Movassagh.
\newblock Quantum supremacy and hardness of estimating output probabilities of
  quantum circuits.
\newblock In {\em 2021 IEEE 62nd Annual Symposium on Foundations of Computer
  Science (FOCS)}, pages 1296--1307. IEEE, 2022.

\bibitem{Bouland2022noise}
Adam Bouland, Bill Fefferman, Zeph Landau, and Yunchao Liu.
\newblock Noise and the frontier of quantum supremacy.
\newblock In {\em 2021 IEEE 62nd Annual Symposium on Foundations of Computer
  Science (FOCS)}, pages 1308--1317. IEEE, 2022.

\bibitem{Bremner2016prl}
Michael~J Bremner, Ashley Montanaro, and Dan~J Shepherd.
\newblock Average-case complexity versus approximate simulation of commuting
  quantum computations.
\newblock {\em Physical review letters}, 117(8):080501, 2016.

\bibitem{Hangleiter2018anticoncentration}
Dominik Hangleiter, Juan Bermejo-Vega, Martin Schwarz, and Jens Eisert.
\newblock Anticoncentration theorems for schemes showing a quantum speedup.
\newblock {\em Quantum}, 2:65, 2018.

\bibitem{Bermejo2018prx}
Juan Bermejo-Vega, Dominik Hangleiter, Martin Schwarz, Robert Raussendorf, and
  Jens Eisert.
\newblock Architectures for quantum simulation showing a quantum speedup.
\newblock {\em Physical Review X}, 8(2):021010, 2018.

\bibitem{Dalzell2022prxq}
Alexander~M Dalzell, Nicholas Hunter-Jones, and Fernando~GSL Brand{\~a}o.
\newblock Random quantum circuits anticoncentrate in log depth.
\newblock {\em PRX Quantum}, 3(1):010333, 2022.

\bibitem{barak2020spoofing}
Boaz Barak, Chi-Ning Chou, and Xun Gao.
\newblock Spoofing linear cross-entropy benchmarking in shallow quantum
  circuits.
\newblock {\em arXiv preprint arXiv:2005.02421}, 2020.

\bibitem{Aharonov2022noisy}
Dorit Aharonov, Xun Gao, Zeph Landau, Yunchao Liu, and Umesh Vazirani.
\newblock A polynomial-time classical algorithm for noisy random circuit
  sampling.
\newblock {\em arXiv preprint arXiv:2211.03999}, 2022.

\bibitem{valiant1979complexity}
Leslie~G Valiant.
\newblock The complexity of computing the permanent.
\newblock {\em Theoretical computer science}, 8(2):189--201, 1979.

\bibitem{Russell2017njp}
Nicholas~J Russell, Levon Chakhmakhchyan, Jeremy~L O’Brien, and Anthony
  Laing.
\newblock Direct dialling of haar random unitary matrices.
\newblock {\em New journal of physics}, 19(3):033007, 2017.

\bibitem{Zyczkowski1994jpa}
Karol Zyczkowski and Marek Kus.
\newblock Random unitary matrices.
\newblock {\em Journal of Physics A: Mathematical and General}, 27(12):4235,
  1994.

\bibitem{morimae2017hardness}
Tomoyuki Morimae.
\newblock Hardness of classically sampling the one-clean-qubit model with
  constant total variation distance error.
\newblock {\em Physical Review A}, 96(4):040302, 2017.

\bibitem{fefferman2023effect}
Bill Fefferman, Soumik Ghosh, Michael Gullans, Kohdai Kuroiwa, and Kunal
  Sharma.
\newblock Effect of non-unital noise on random circuit sampling.
\newblock {\em arXiv preprint arXiv:2306.16659}, 2023.

\bibitem{schwarz2013simulating}
Martin Schwarz and Maarten Van~den Nest.
\newblock Simulating quantum circuits with sparse output distributions.
\newblock {\em arXiv preprint arXiv:1310.6749}, 2013.

\bibitem{roga2020classical}
Wojciech Roga and Masahiro Takeoka.
\newblock Classical simulation of boson sampling with sparse output.
\newblock {\em Scientific reports}, 10(1):14739, 2020.

\bibitem{bravyi2023classical}
Sergey Bravyi, David Gosset, and Yinchen Liu.
\newblock Classical simulation of peaked shallow quantum circuits.
\newblock {\em arXiv preprint arXiv:2309.08405}, 2023.

\bibitem{Clements2016optica}
William~R Clements, Peter~C Humphreys, Benjamin~J Metcalf, W~Steven Kolthammer,
  and Ian~A Walmsley.
\newblock Optimal design for universal multiport interferometers.
\newblock {\em Optica}, 3(12):1460--1465, 2016.

\bibitem{Zhang2021npjq}
Bingzhi Zhang and Quntao Zhuang.
\newblock Entanglement formation in continuous-variable random quantum
  networks.
\newblock {\em npj Quantum Information}, 7(1):33, 2021.

\bibitem{Tang2018sciadv}
Hao Tang, Xiao-Feng Lin, Zhen Feng, Jing-Yuan Chen, Jun Gao, Ke~Sun, Chao-Yue
  Wang, Peng-Cheng Lai, Xiao-Yun Xu, Yao Wang, et~al.
\newblock Experimental two-dimensional quantum walk on a photonic chip.
\newblock {\em Science advances}, 4(5):eaat3174, 2018.

\bibitem{Tang2022prl}
Hao Tang, Leonardo Banchi, Tian-Yu Wang, Xiao-Wen Shang, Xi~Tan, Wen-Hao Zhou,
  Zhen Feng, Anurag Pal, Hang Li, Cheng-Qiu Hu, et~al.
\newblock Generating haar-uniform randomness using stochastic quantum walks on
  a photonic chip.
\newblock {\em Physical Review Letters}, 128(5):050503, 2022.

\bibitem{parker1995random}
Douglass~Stott Parker.
\newblock {\em Random butterfly transformations with applications in
  computational linear algebra}.
\newblock UCLA Computer Science Department, 1995.

\bibitem{mathieu2014fast}
Michael Mathieu and Yann LeCun.
\newblock Fast approximation of rotations and hessians matrices.
\newblock {\em arXiv preprint arXiv:1404.7195}, 2014.

\bibitem{dao2019learning}
Tri Dao, Albert Gu, Matthew Eichhorn, Atri Rudra, and Christopher R{\'e}.
\newblock Learning fast algorithms for linear transforms using butterfly
  factorizations.
\newblock In {\em International conference on machine learning}, pages
  1517--1527. PMLR, 2019.

\bibitem{Emerson2003science}
Joseph Emerson, Yaakov~S Weinstein, Marcos Saraceno, Seth Lloyd, and David~G
  Cory.
\newblock Pseudo-random unitary operators for quantum information processing.
\newblock {\em science}, 302(5653):2098--2100, 2003.

\bibitem{Emerson2005pra}
Joseph Emerson, Etera Livine, and Seth Lloyd.
\newblock Convergence conditions for random quantum circuits.
\newblock {\em Physical Review A}, 72(4):060302, 2005.

\bibitem{Verstraete2006prb}
Frank Verstraete and J~Ignacio Cirac.
\newblock Matrix product states represent ground states faithfully.
\newblock {\em Physical review b}, 73(9):094423, 2006.

\bibitem{Schuch2008prl}
Norbert Schuch, Michael~M Wolf, Frank Verstraete, and J~Ignacio Cirac.
\newblock Entropy scaling and simulability by matrix product states.
\newblock {\em Physical review letters}, 100(3):030504, 2008.

\bibitem{Adesso2012prl}
Gerardo Adesso, Davide Girolami, and Alessio Serafini.
\newblock Measuring gaussian quantum information and correlations using the
  r{\'e}nyi entropy of order 2.
\newblock {\em Physical review letters}, 109(19):190502, 2012.

\bibitem{Lami2016prl}
Ludovico Lami, Christoph Hirche, Gerardo Adesso, and Andreas Winter.
\newblock Schur complement inequalities for covariance matrices and monogamy of
  quantum correlations.
\newblock {\em Physical Review Letters}, 117(22):220502, 2016.

\bibitem{Camilo2019prb}
Giancarlo Camilo, Gabriel~T Landi, and Sebas Eli{\"e}ns.
\newblock Strong subadditivity of the r{\'e}nyi entropies for bosonic and
  fermionic gaussian states.
\newblock {\em Physical Review B}, 99(4):045155, 2019.

\bibitem{Iosue2022page}
Joseph~T Iosue, Adam Ehrenberg, Dominik Hangleiter, Abhinav Deshpande, and
  Alexey~V Gorshkov.
\newblock Page curves and typical entanglement in linear optics.
\newblock {\em arXiv preprint arXiv:2209.06838}, 2022.

\bibitem{serafini2017quantumcontinuousvariable}
Alessio Serafini.
\newblock {\em Quantum continuous variables: a primer of theoretical methods}.
\newblock CRC press, 2017.

\bibitem{low2010pseudo}
Richard~A Low.
\newblock Pseudo-randomness and learning in quantum computation.
\newblock {\em arXiv preprint arXiv:1006.5227}, 2010.

\bibitem{sim2019expressibility}
Sukin Sim, Peter~D Johnson, and Al{\'a}n Aspuru-Guzik.
\newblock Expressibility and entangling capability of parameterized quantum
  circuits for hybrid quantum-classical algorithms.
\newblock {\em Advanced Quantum Technologies}, 2(12):1900070, 2019.

\bibitem{gross2007evenly}
David Gross, Koenraad Audenaert, and Jens Eisert.
\newblock Evenly distributed unitaries: On the structure of unitary designs.
\newblock {\em Journal of mathematical physics}, 48(5), 2007.

\bibitem{roberts2017chaos}
Daniel~A Roberts and Beni Yoshida.
\newblock Chaos and complexity by design.
\newblock {\em Journal of High Energy Physics}, 2017(4):1--64, 2017.

\bibitem{cotler2017chaos}
Jordan Cotler, Nicholas Hunter-Jones, Junyu Liu, and Beni Yoshida.
\newblock Chaos, complexity, and random matrices.
\newblock {\em Journal of High Energy Physics}, 2017(11):1--60, 2017.

\bibitem{liu2018spectral}
Junyu Liu.
\newblock Spectral form factors and late time quantum chaos.
\newblock {\em Physical Review D}, 98(8):086026, 2018.

\bibitem{liu2022estimating}
Minzhao Liu, Junyu Liu, Yuri Alexeev, and Liang Jiang.
\newblock Estimating the randomness of quantum circuit ensembles up to 50
  qubits.
\newblock {\em npj Quantum Information}, 8(1):137, 2022.

\bibitem{crespi2016suppression}
Andrea Crespi, Roberto Osellame, Roberta Ramponi, Marco Bentivegna, Fulvio
  Flamini, Nicol{\`o} Spagnolo, Niko Viggianiello, Luca Innocenti, Paolo
  Mataloni, and Fabio Sciarrino.
\newblock Suppression law of quantum states in a 3d photonic fast fourier
  transform chip.
\newblock {\em Nature communications}, 7(1):10469, 2016.

\bibitem{imany2020probing}
Poolad Imany, Navin~B Lingaraju, Mohammed~S Alshaykh, Daniel~E Leaird, and
  Andrew~M Weiner.
\newblock Probing quantum walks through coherent control of high-dimensionally
  entangled photons.
\newblock {\em Science advances}, 6(29):eaba8066, 2020.

\bibitem{hu2020realization}
Yaowen Hu, Christian Reimer, Amirhassan Shams-Ansari, Mian Zhang, and Marko
  Loncar.
\newblock Realization of high-dimensional frequency crystals in electro-optic
  microcombs.
\newblock {\em Optica}, 7(9):1189--1194, 2020.

\bibitem{lau2012proposal}
Hoi-Kwan Lau and Daniel~FV James.
\newblock Proposal for a scalable universal bosonic simulator using
  individually trapped ions.
\newblock {\em Physical Review A}, 85(6):062329, 2012.

\bibitem{shen2014scalable}
Chao Shen, Zhen Zhang, and L-M Duan.
\newblock Scalable implementation of boson sampling with trapped ions.
\newblock {\em Physical review letters}, 112(5):050504, 2014.

\bibitem{leibfried2003quantum}
Dietrich Leibfried, Rainer Blatt, Christopher Monroe, and David Wineland.
\newblock Quantum dynamics of single trapped ions.
\newblock {\em Reviews of Modern Physics}, 75(1):281, 2003.

\bibitem{Chen2023scalable}
Wentao Chen, Yao Lu, Shuaining Zhang, Kuan Zhang, Guanhao Huang, Mu~Qiao,
  Xiaolu Su, Jialiang Zhang, Jing-Ning Zhang, Leonardo Banchi, et~al.
\newblock Scalable and programmable phononic network with trapped ions.
\newblock {\em Nature Physics}, pages 1--7, 2023.

\bibitem{Arkhipov2015PRA}
Alex Arkhipov.
\newblock Bosonsampling is robust against small errors in the network matrix.
\newblock {\em Physical Review A}, 92(6):062326, 2015.

\bibitem{efron1994introduction}
Bradley Efron and Robert~J Tibshirani.
\newblock {\em An introduction to the bootstrap}.
\newblock Chapman and Hall/CRC, 1994.

\end{thebibliography}

\end{document}